\documentclass[journal]{IEEEtran}

\usepackage{booktabs}
\usepackage{threeparttable}
\usepackage{amsmath,graphicx,cite,amssymb}
\usepackage{amsfonts,multirow,bm,array,setspace,stfloats}

\usepackage{graphicx,float,cite,amssymb}
\usepackage{graphics}
\DeclareGraphicsExtensions{.pdf,.jpeg,.png,.jpg}   
\usepackage{multirow,bm,bbm,array,setspace}
\usepackage{textcomp}
\usepackage{psfrag}
\usepackage{color}
\usepackage{pstricks,enumerate}
\usepackage{bbm}
\usepackage{subfigure}
\usepackage{xpatch}
\usepackage{algorithm,algpseudocode}
\makeatletter
\newcommand{\distas}[1]{\mathbin{\overset{#1}{\kern\z@\sim}}}%
\newsavebox{\mybox}\newsavebox{\mysim}
\newcommand{\distras}[1]{%
    \savebox{\mybox}{\hbox{\kern1pt$\scriptstyle#1$\kern1pt}}%
	\savebox{\mysim}{\hbox{$\sim$}}%
	\mathbin{\overset{#1}{\kern\z@\resizebox{\wd\mybox}{\ht\mysim}{$\sim$}}}%
}

\ExplSyntaxOn
\cs_new:Npn \bibColoredItems #1#2
  {
    \clist_map_inline:nn {#2} { \cs_new:cpn {bib@colored@##1} {#1} } 
  }
\ExplSyntaxOff

\newcommand\bib@setcolor[1]{%
  \ifcsname bib@colored@#1\endcsname
    \expandafter\color\expandafter{\csname bib@colored@#1\endcsname}
  \else
    \normalcolor
  \fi
}

\xpatchcmd\@bibitem
  {\item}
  {\bib@setcolor{#1}\item}
  {}{\PatchFailed}

\xpatchcmd\@lbibitem
  {\item}
  {\bib@setcolor{#2}\item}
  {}{\PatchFailed}

\makeatother
\ifCLASSINFOpdf
\else
\fi

\newtheorem{proposition}{Proposition}

\newtheorem{theorem}{Theorem}

\newtheorem{remark}{Remark}

\newcommand{\bH}{\bm{H}}

\begin{document}
	%
\title{Blind Beamforming for Intelligent Reflecting Surfaces: A Gradient Bandit Approach}
\author{
\IEEEauthorblockN{
    Wenhai Lai, \IEEEmembership{Member,~IEEE},  Mingxiao Li, \IEEEmembership{Graduate Student Member,~IEEE}, \\and Kaiming Shen, \IEEEmembership{Senior Member,~IEEE}
}
\thanks{
Manuscript submitted on \today. This work has been presented in part at the IEEE International Conference on Acoustics, Speech, and Signal Processing (ICASSP) 2024 \cite{lai2024blind}.

Wenhai Lai is with the Queensland Technology International School of Engineering, Dalian University of Technology, China (e-mail: wenhailai@dlut.edu.cn).

Mingxiao Li and Kaiming Shen are with the School of Science and Engineering, The Chinese University of Hong Kong (Shenzhen), China (e-mails: mingxiaoli@link.cuhk.edu.cn; shenkaiming@cuhk.edu.cn).
}
}

%


\maketitle

\begin{abstract}
The beamforming problem of intelligent reflecting surface (IRS) has been extensively considered from an optimization perspective assuming that channel state information (CSI) is available. However, the reality is that the existing prototypes seldom follow this model-based approach because channel estimation is technically difficult and costly for the network protocols and hardware to date. A recent trend is to perform beamforming blindly without channel knowledge. This work looks at blind beamforming from a reinforcement learning point of view. We first show that the existing blind beamforming method boils down to a special case of the greedy algorithm in the reinforcement learning context. We analyze the resulting cumulative regret, and further propose an upper approximation to facilitate the optimization of the exploration probability. Moreover, we show that a gradient sampling scheme can improve the efficiency of reinforcement learning as compared to the uniform sampling scheme adopted in the existing blind beamforming method. We further verify the convergence of the proposed gradient sampling scheme by showing that it is a stochastic approximation to gradient ascent. Finally, we physically implement the proposed method in a real-world prototype system. Our field test results show that, as compared to the existing blind beamforming method, 
the proposed gradient sampling boosts the average signal-to-noise ratio (SNR) by more than 5 dB with 1000 samples.
\end{abstract}

\begin{IEEEkeywords}
Intelligent reflecting surface (IRS), blind beamforming, reinforcement learning.
\end{IEEEkeywords}

\section{Introduction}
\label{sec:intro}

\IEEEPARstart{I}{ntelligent} reflecting surface (IRS) \cite{wu2021intelligent,gong2020toward}, a.k.a. reconfigurable intelligent surface (RIS) \cite{huang2019reconfigurable,liu2021reconfigurable}, is an emerging wireless technology that harnesses the signal reflection paths to enhance wireless transmission. An IRS consists of an array of reflective elements (REs)---which induce phase shifts into their respective incident signals. A major use of the IRS is to coordinate phase shifts across its REs, i.e., \emph{passive beamforming}, in order to maximize the received signal power at the target position. The existing approaches to passive beamforming typically entail channel state information (CSI), whereas some recent works \cite{Arun2020RFocus,ren2022configuring,Abdallah2024MADRL,Bin2026adaptive,Wang2025BlindInterference} suggest optimizing phase shifts blindly without channel acquisition. We further develop this new idea by means of reinforcement learning: 
\begin{itemize}
    \item We show that the blind beamforming method in \cite{Arun2020RFocus, ren2022configuring} is equivalent to an $\epsilon$-greedy algorithm for the multi-armed bandit problem \cite{cesabianchi1998finite,auer2002finite,vermorel2005multi}. We analyze the resulting cumulative regret, and further propose an upper approximation to make the optimization of the exploration probability easier.
    \item We improve the efficiency of blind beamforming by replacing uniform samples with gradient samples. Specifically, we show that gradient samples enable faster convergence, allowing better performance to be achieved with fewer samples. Moreover, we demonstrate the advantage of the proposed gradient sampling scheme in field tests.

\end{itemize}

In the literature, passive beamforming has been considered extensively from a model-based optimization perspective: it first estimates the channels for all the propagation paths to formulate the problem model explicitly, and then optimizes the phase shifts based on the problem model. When CSI is available, the phase shift optimization for IRS has been well studied. Many existing works assume that phase shifts can be arbitrarily chosen within the interval $[0,2\pi)$, so the phase shifting problem is a continuous optimization problem. As a result, the passive beamforming problem of IRS can be addressed by the standard methods, e.g., semidefinite relaxation (SDR)
\cite{luo2010semidefinite, zhou2020robust, zeng2020sum, xie2020max, huang2020decentralized} and fractional programming (FP) \cite{shen2018fractional_p1, feng2020physical,zhu2020power,shafique2020optimization,zhang2022active,zhang2021joint}.
However, in practice, the choice of phase shift is typically limited to a discrete set due to the hardware constraint, e.g., each RE takes on its phase shift from the binary set $\{0,\pi\}$ in the 1-bit phase shifting case. To account for the discrete constraint, a common heuristic \cite{Feng2021RFlens,wu2020beamforming,li2020joint,you2020channel} is to relax the problem and then round the continuous solution to the discrete set; the optimal solutions are known only for the single-antenna case \cite{zhang2022optimal,ren2023optimal}.

But estimating channels for an IRS-assisted network is by no means trivial; a variety of sophisticated strategies have been proposed. In \cite{mishra2019channel, yang2020intelligent,lin2020reconfigurable}, the channel estimation is conducted assuming that each RE can be turned off, i.e., every {\footnotesize{OFF}} RE absorbs all signals falling on it.
Another line of work \cite{zheng2019intelligent,jensen2020optimal,de2021channel} requires introducing extra pilots according to the discrete Fourier transform (DFT) matrix. Assuming a certain type of sparse model for channels, \cite{chen2023channel} suggests a two-step method that first estimates the common column of AoD steering vectors and then solves the sparse matrix recovery problem. In particular, we mention that all the above methods entail full cooperation between the IRS and the base station.
However, it is difficult to carry out these channel estimation methods in real-world networks because of the following two practical issues:
\begin{enumerate}[i.]
    \item Each reflected channel alone can be easily overwhelmed by the much stronger direct channel, and hence is difficult to measure.
    \item The existing channel estimation methods for IRS are incompatible with the current networking protocols and hardware.
\end{enumerate}
As such, even though the model-driven approach has been extensively studied in the literature to date, the existing works \cite{Arun2020RFocus, ren2022configuring, pei2021ris, staat2022irshield} that pursue prototype implementations of IRS rarely consider channel estimation.

Due to the difficulties and high cost of channel estimation, there has been an increasing interest in configuring IRS without CSI. For instance, \cite{psomas2021low} proposes a distributed ascent algorithm for the continuous IRS beamforming based on the signal-to-noise ratio (SNR) feedback, and \cite{cai2023toward} proposes an IRS-aided joint index keying $M$-ary differential chaos shift keying system where the receiver can retrieve information bits by performing non-coherent correlation demodulation without CSI. The commonly used beam training method \cite{you2020fast,wang2022fast,wang2021jointbeam} constitutes another line of endeavors to perform blind beamforming, which simply tries out a sequence of random beamformers and then picks the best one. Although the core idea of beam training is fairly simple, the performance of this method is limited by the size of random samples. Another existing idea is to use the neural network to learn the direct mapping from the received pilot signals to the phase shift decision \cite{2022Xuwy,2020HuangcwJSAC,2021JiangtJSAC}. However, the parameter tuning of the neural network can be quite tricky.

\begin{figure*}[t]
    \centering
    \includegraphics[width=0.8\linewidth]{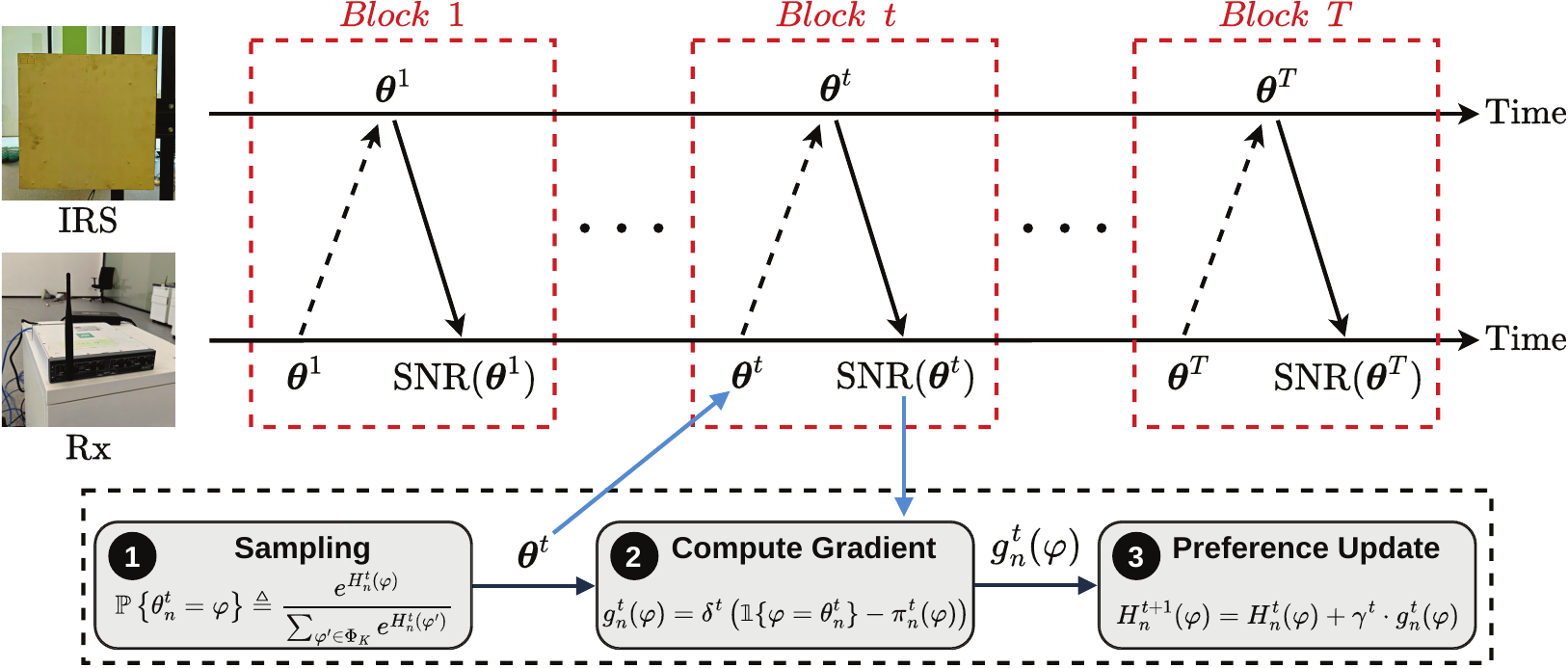}
    \caption{Procedure of the proposed gradient sampling. We divide the channel coherent time into $T$ blocks. Moreover, we introduce a \emph{numerical preference} variable $H_n^t(\varphi)$ for each $(n,\varphi)$ in each block $t$. Specifically, $H_n^t(\varphi)$ characterizes the ``goodness'' of setting $\theta^t_n=\varphi$, where $\theta_n^t$ is the phase shift induced by RE $n$ in block $t$ and $\varphi$ is a possible phase shift choice from a discrete set. In each block $t$, we first generate a phase-shift configuration $\boldsymbol{\theta}^t$ by sampling each RE phase shift $\theta_n$ according to the softmax probability induced by the current preference variables. The receiver measures the resulting SNR as the reward. Based on the SNR feedback, we compute the stochastic gradient $g_n^t(\varphi)$. Finally, the preference variables are updated by stochastic gradient ascent.}
    \label{fig:gradient_sampling}
\end{figure*}

\begin{table}[t]
\caption{List of Main Variables}
\label{tab:main_variables}
\centering
\renewcommand\arraystretch{1.5}
\footnotesize
\begin{tabular}{|c||l|}
\hline
\textbf{Symbol} & \textbf{Definition} \\
\hline\hline

$N$ & number of REs on the IRS \\
\hline

$K$ & number of phase-shift choices for each RE \\
\hline

$T$ & total number of transmission blocks \\
\hline

$\Phi_K$ & $\{0,\omega,\ldots,(K-1)\omega\}\;\;\text{with}\;\;\omega = \frac{2\pi}{K}$ \\
\hline

$\Phi_K^N$ &the $N$-fold Cartesian product of $\Phi_K$\\
\hline

$\theta_n^t$ & phase shift of RE $n$ in block $t$ \\
\hline

$\boldsymbol{\theta}^t$ & IRS phase-shift vector in block $t$
 \\
\hline

$\epsilon$ & exploration probability of the $\epsilon$-greedy algorithm
 \\
\hline

$\mathrm{SNR}(\boldsymbol{\theta}^t)$ & received SNR in block $t$ \\
\hline

$Q_n^t(\varphi)$ & CSM value of phase shift $\varphi$ for RE in block $t$ $n$\\
\hline

$H_n^t(\varphi)$ & preference variable of phase shift $\varphi$ for RE $n$ in block $t$ \\
\hline

$\pi_n^t(\varphi)$ & probability of selecting
$\varphi$ for RE $n$ in block $t$ \\
\hline

$g_n^t(\varphi)$ & stochastic gradient for updating
$H_n^t(\varphi)$ in block $t$ \\
\hline

$\gamma^t$ & learning rate in block $t$ \\
\hline

\end{tabular}
\end{table}

Our work is most closely related to the blind beamforming method in \cite{Arun2020RFocus, ren2022configuring} based on statistics. Specifically, \cite{ren2022configuring} initiates the idea of using the conditional sample mean (CSM) to configure the IRS for an isolated point-to-point link. Its main idea follows: Choose phase shifts for all REs at random, then use the average performance conditioned on a particular choice of phase shift for the $n$th RE to characterize the goodness of this phase shift decision, and finally choose a phase shift for each RE $n$ to maximize the corresponding conditional average performance. More recently, the CSM algorithm has been extended for multiple IRSs \cite{Xu2024Coordinating}, for multiple users \cite{xu2024blind}, and for the Non-line-of-sight (NLoS) transmission \cite{lai2025adaptive}. While \cite{Arun2020RFocus, ren2022configuring} give the statistical motivation behind blind beamforming, this work shows that blind beamforming can be interpreted as an $\epsilon$-greedy algorithm \cite{watkins1989learning} for solving the multi-armed bandit problem in the reinforcement learning field; the key step is to figure out ``cumulative regret'' in our problem case. Following this line, we further extend the gradient bandit method \cite{sutton2018reinforcement} to improve the efficiency of blind beamforming. Intuitively, rather than trying out new samples uniformly, we propose trying out those more ``promising'' samples with higher priorities so as to attain the optimum more quickly. More importantly, we show that the proposed gradient sampling scheme is a stochastic approximation to the gradient ascent, so its convergence is guaranteed. Furthermore, we build a prototype system to evaluate this gradient sampling scheme in a real-world testbed; the field test results show that it can outperform the existing blind beamforming method significantly.


\textit{Notation:}
Here and throughout, the set of complex numbers is denoted as $\mathbb{C}$. Let $\mathbb E[\cdot]$ denote expectation. The indicator function is denoted by
$\mathbbm 1\{\cdot\}$. The notation
$\operatorname{Uniform}(\mathcal A)$ represents the uniform
distribution over a finite set $\mathcal A$.  A summary of main variables is provided in Table \ref{tab:main_variables} for ease of reference.

\section{System Model}
\label{sec:sys_model}

Consider an IRS-assisted wireless network as shown in Fig. \ref{fig:system}. Assume that the transmitter and receiver have one antenna each, and that the IRS consists of $N$ REs in total. Every RE can induce a reflected path from the transmitter to the receiver. Denote by $h_0\in\mathbb C$ the direct channel from the transmitter to the receiver, $h_{n,t}\in\mathbb C$ the channel from the transmitter to RE $n$, and $h_{n,r}\in\mathbb C$ the channel from RE $n$ to the receiver, for $n=1,\ldots,N$.
Each cascaded reflected channel $h_n$ can then be factored as
\begin{equation}
    h_n = h_{n,r}\times h_{n,t}.
\end{equation}
Note that we do not assume any channel model for both the direct and reflected channels. In this work, we assume that the channels are unknown \emph{a priori}.

\begin{figure}[t]
    \centering
    \includegraphics[width=0.9\linewidth]{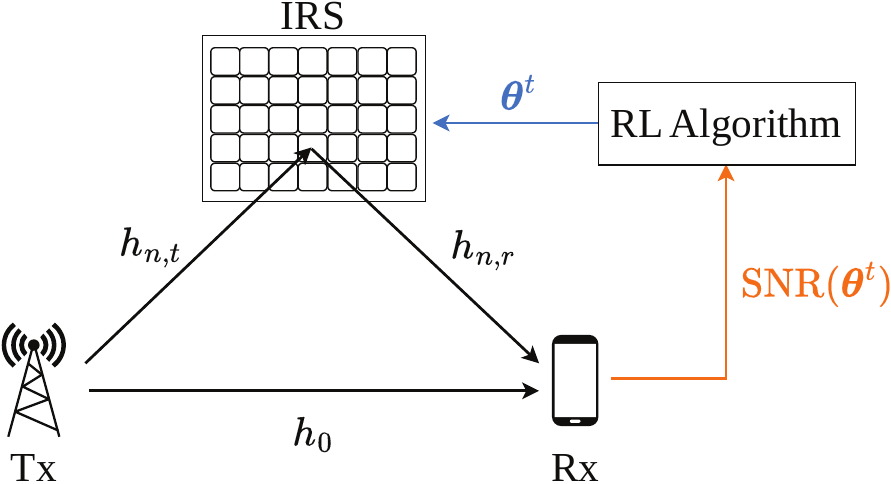}
    \caption{The IRS-assisted wireless network.}
    \label{fig:system}
\end{figure}

Moreover, each RE $n$ can induce a phase shift in the corresponding reflected channel. Assume that the configuration of IRS can be updated on a per-block basis; let $\theta^t_n\in [0, 2\pi)$ be the phase shift of RE $n$ in block $t$. Assume also that there are a total of $T$ blocks and the channels are invariant throughout. In practice, the value of each phase shift is restricted to a prescribed discrete set
\begin{equation}
\Phi_K = \{0,\omega,\ldots,(K-1)\omega\}\;\;\text{with}\;\;\omega = \frac{2\pi}{K},
\end{equation}
where the set size $K\ge2$.
With the background noise $Z^t\sim\mathcal{CN}(0,\sigma^2)$ and the transmit signal $X^t\sim\mathcal{CN}(0,P)$, the received signal $Y^t\in\mathbb C$ in block $t$ is
\begin{equation}
Y^t = \left(h_0+\sum_{n=1}^{N}h_ne^{j\theta_n^t}\right)X^t+Z^t,
\end{equation}
and the SNR in block $t$ can be computed as a function of the corresponding phase shift array $\bm\theta^{t} = (\theta^{t}_1,\ldots,\theta^{t}_N)$ as
\begin{equation}
\label{SNR}
\mathrm{SNR}(\bm\theta^t) = \frac{P\left|h_0+\sum_{n=1}^{N}h_ne^{j\theta^{t}_{n}}\right|^2}{\sigma^2}.
\end{equation}
Roughly speaking, our goal is to maximize SNR for all the $T$ blocks, which will be mathematically formalized in the context of reinforcement learning in the following section. Again, we emphasize that the channels are unknown \emph{a priori}.

\section{A Reinforcement Learning Perspective of Existing Blind Beamforming Methods}
\label{sec:RL_perspective}

\subsection{Review of Existing Blind Beamforming \cite{Arun2020RFocus, ren2022configuring}}
Should we alter the phase shifts when entering a new block? This question becomes trivial if CSI is already known. Since the channels are constant, we can just optimize the phase shift by solving the SNR maximization problem
\begin{subequations}
\label{eq:SNR_problem}
\begin{align}
    &\underset{\bm\theta}{\text{maximize}} \quad \frac{P\bigg|h_0+\sum_{n=1}^{N}h_ne^{j\theta_n}\bigg|^2}{\sigma^2}\\
    &\text {subject to} \quad\, \theta_n \in \Phi_K,\quad n=1, \ldots, N,
\end{align}
\end{subequations}
for the first block and fix the solution thereafter. Regarding the SNR maximization with CSI available, a common idea 
\cite{wu2020beamforming,you2020channel,abeywickrama2020intelligent,wang2020intelligent} is to rotate every $h_n$ to the closest position to $h_0$, i.e., 
\begin{equation}
\label{CPP}
\theta^\star_{n} = \min_{\varphi\in\Phi_K}\left|\mathsf{Arg} \left(\frac{h_ne^{j\varphi}}{h_0}\right)\right|,
\end{equation}
where $\mathsf{Arg}$ is the principal argument, namely the closest point projection (CPP).

\begin{algorithm}[t]
\caption{CSM blind beamforming method in \cite{Arun2020RFocus, ren2022configuring}}
\label{alg:CSM}
\begin{algorithmic}[1]

\State \textbf{Input:} Discrete set $\Phi_K$, number of blocks $T$, number of exploration blocks $T_0$
\For{$t=1,\ldots,T_0$}
    \State Sample $\theta_n^t\sim \mathrm{Uniform}(\Phi_K)$, $\forall n=1,\ldots,N$

    \State Apply $\boldsymbol{\theta}^t=(\theta_1^t,\ldots,\theta_N^t)$ and measure
    $\mathrm{SNR}(\boldsymbol{\theta}^t)$.
\EndFor
\State Compute the CSM $Q_n^{T_0}(\varphi)$ according to \eqref{eq:compute_Q}.
\State Decide each $\theta_n$ according to \eqref{eq:CSM_solution}.
\For{$t=T_0+1,\ldots,T$}

    \State Apply $\boldsymbol{\theta}^t=(\theta_1^{\mathrm{CSM}},\ldots,\theta_N^{\mathrm{CSM}})$.
\EndFor
\end{algorithmic}
\end{algorithm}

But the reality is that the CSI is unknown. It turns out that the CPP solution can be recovered implicitly via random sampling. Specifically, as proposed in \cite{Arun2020RFocus, ren2022configuring}, we try out a new sample $\bm\theta^t$ with each $\theta^t_n$ drawn from $\Phi_K$ i.i.d. and then record the corresponding $\mathrm{SNR}(\bm\theta^t)$. After $T_0$ samples, we compute the conditional sample mean (CSM)
\begin{equation}
    \label{eq:compute_Q}
    Q^{T_0}_{n}(\varphi) = \frac{\sum^{T_0}_{i=1}\mathrm{SNR}(\bm\theta^{i})\cdot\mathbbm 1_{\theta^{i}_n=\varphi}}{\sum^{T_0}_{i=1}\mathbbm 1_{\theta^{i}_n=\varphi}}
\end{equation}
for each $\varphi\in\Phi_K$. The blind beamforming method in \cite{Arun2020RFocus, ren2022configuring} is to choose each phase shift $\theta_n$ to maximize the CSM, i.e.,
\begin{equation}
    \label{eq:CSM_solution}
    \theta_n^{\mathrm{CSM}} = \arg\mathop{\max}\limits_{\varphi\in\Phi_K}Q^{T_0}_n(\varphi).
\end{equation}
The main result in \cite{Arun2020RFocus, ren2022configuring} follows:
\begin{proposition}[Theorem 2 in \cite{ren2022configuring}]
    As $T_0\rightarrow\infty$, the solution $\theta_n^{\mathrm{CSM}}$ in \eqref{eq:CSM_solution} tends to the CPP solution $\theta^\star_n$ in \eqref{CPP}.
\end{proposition}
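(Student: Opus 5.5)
Since the samples are i.i.d., the plan is to reduce the statement to computing the conditional expectation that the CSM estimates. For fixed $n$ and $\varphi$, the numerator and denominator of \eqref{eq:compute_Q}, after dividing by $T_0$, are empirical averages of i.i.d. bounded random variables. Their means are $\mathbb E[\mathrm{SNR}(\bm\theta)\mathbbm 1_{\theta_n=\varphi}]$ and $1/K$. By the strong law of large numbers, $Q_n^{T_0}(\varphi)\to \mathbb E[\mathrm{SNR}(\bm\theta)\mid\theta_n=\varphi]$ almost surely. Because $\Phi_K$ is finite, the argmax over $\varphi$ converges as well, with one proviso: the limiting maximizer must be unique, or ties must be handled by showing the limit set of maximizers coincides with the CPP set. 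Once two limiting values are strictly separated, the empirical values are eventually ordered the same way, so the argmax stabilizes.

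The main step is to compute the conditional mean in closed form. Write $\mathrm{SNR}(\bm\theta)=\frac{P}{\sigma^2}|h_0+h_ne^{j\varphi}+\sum_{m\ne n}h_me^{j\theta_m}|^2$. Expand the modulus square and take expectation over the independent $\theta_m\sim\mathrm{Uniform}(\Phi_K)$, $m\ne n$. For $K\ge2$ we have $\mathbb E[e^{j\theta_m}]=\frac1K\sum_{k=0}^{K-1}e^{jk\omega}=0$. Hence every cross term containing a single factor $e^{j\theta_m}$ vanishes. For $m\ne m'$ the cross terms $h_m\overline{h_{m'}}e^{j(\theta_m-\theta_{m'})}$ vanish by independence, and the diagonal terms give $|h_m|^2$. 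This yields
\begin{equation*}
\mathbb E[\mathrm{SNR}(\bm\theta)\mid\theta_n=\varphi]=\frac{P}{\sigma^2}\Big(|h_0+h_ne^{j\varphi}|^2+\sum_{m\ne n}|h_m|^2\Big).
\end{equation*}
The only $\varphi$-dependent part is $|h_0|^2+|h_n|^2+2\mathrm{Re}\{h_n e^{j\varphi}\overline{h_0}\}=|h_0|^2+|h_n|^2+2|h_0h_n|\cos\big(\mathsf{Arg}(h_ne^{j\varphi}/h_0)\big)$.

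The final step identifies the argmax. Maximizing $\cos(\mathsf{Arg}(h_ne^{j\varphi}/h_0))$ over $\varphi\in\Phi_K$ is equivalent to minimizing $|\mathsf{Arg}(h_ne^{j\varphi}/h_0)|$, because cosine is even and strictly decreasing on $[0,\pi]$. That minimizer is exactly the CPP rule \eqref{CPP}, read as an argmin. This step needs $h_0\neq0$ and $h_n\ne0$; otherwise both sides are degenerate and any choice is trivially optimal.

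The main obstacle is making the phrase ``tends to'' precise for the argmax of a random, discrete-valued quantity. I would first handle the generic case, where the minimizer of $|\mathsf{Arg}|$ is unique. There the limiting gap $\delta>0$ between the best and second-best conditional means gives a.s. eventual equality $\theta_n^{\mathrm{CSM}}=\theta_n^\star$, using the a.s. convergence of the finitely many $Q_n^{T_0}(\varphi)$. In the tie case, I would state the result as eventual membership of $\theta_n^{\mathrm{CSM}}$ in the set of CPP minimizers. I would also make sure the denominators are eventually positive, which holds a.s. since each $\varphi$ is hit with probability $1/K$.
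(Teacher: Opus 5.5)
Your proof is correct. The paper does not prove this proposition itself; it imports it as Theorem~2 of \cite{ren2022configuring}, so there is no in-paper argument to compare against. Your route is the standard argument for this result and is complete:
\begin{enumerate}
\item apply the strong law of large numbers to the numerator and denominator of \eqref{eq:compute_Q};
\item use $\mathbb E[e^{j\theta_m}]=0$ for $K\ge 2$ to obtain the closed form $\mathbb E[\mathrm{SNR}(\bm\theta)\mid\theta_n=\varphi]=\frac{P}{\sigma^2}\bigl(|h_0+h_ne^{j\varphi}|^2+\sum_{m\ne n}|h_m|^2\bigr)$;
\item reduce the maximization to minimizing $|\mathsf{Arg}(h_ne^{j\varphi}/h_0)|$, since cosine is even and strictly decreasing on $[0,\pi]$.
\end{enumerate}
The details are handled too. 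The denominators are eventually positive almost surely, and the separation-gap argument turns almost-sure convergence of the finitely many $Q_n^{T_0}(\varphi)$ into almost-sure eventual equality of the argmax.

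Your version buys two things beyond the bare citation. First, it makes ``tends to'' precise: eventual equality when the CPP minimizer is unique, eventual membership in the minimizer set under ties, and degeneracy when $h_0=0$ or $h_n=0$. Second, your explicit formula for $Q_n^\infty(\varphi)$ is exactly what justifies two claims the paper merely asserts in Section~III. One is that $G_t\ge 0$, with equality at the CPP solution. The other is the per-term bound \eqref{eq:bound_1}, since the difference equals $\frac{P}{\sigma^2}\bigl(|h_0+h_ne^{j\theta_n^\star}|^2-|h_0+h_ne^{j\theta_n^t}|^2\bigr)\le\frac{P}{\sigma^2}(|h_0|+|h_n|)^2\le\rho$.
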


But we cannot let $T_0\rightarrow\infty$ since the total number of blocks $T$ is finite. A practical realization of blind beamforming in our case is to use $T_0<T$ blocks to learn the CSM values $\{Q^{T_0}_n(\varphi)\}$, namely training. For the rest blocks, we decide $\bm\theta^t$ by the blind beamforming method \cite{Arun2020RFocus, ren2022configuring} with the estimated $\{Q^{T_0}_n(\varphi)\}$. Thus, the above blind beamforming method can be recognized as a reinforcement learning approach: \emph{explore} new solutions for the first $T_0$ blocks and \emph{exploit} the CSM so far for the remaining $T-T_0$ blocks.

Furthermore, our problem can be thought of as a multi-armed bandit problem with each possible solution $\bm\theta^t\in\Phi_K^N$ treated as an arm and the received SNR under $\bm\theta^t$ is treated as the reward.
Following this line, we can generalize the CSM blind beamforming method \cite{Arun2020RFocus, ren2022configuring} as an $\epsilon$-greedy algorithm---which is a classical approach to the multi-armed bandit problem. At the beginning of each block $t$, we determine whether the block is used for exploration or exploitation. Specifically, block $t$ is selected as an exploration block with probability $0<\epsilon<1$ and as an exploitation block with probability $1-\epsilon$. Therefore, for each block $t$, the phase shift $\theta_n$ is determined as
\begin{equation}\label{greedy}
\theta_n^{t}=\left\{
\begin{aligned}
&\arg\mathop{\max}\limits_{\varphi\in\Phi_K}Q^t_n(\varphi)\; \text{with probability}\; 1-\epsilon;\\
&\varphi \sim \mathrm{Uniform} (\Phi_K)\;\text{with probability}\; \epsilon.
\end{aligned}
\right.
\end{equation}
Note that the $Q_n^t(\varphi)$ is now computed as the average SNR over all exploration blocks so far, i.e.,
\begin{equation}
\label{eq:update_Q}
    Q_n^t(\varphi)
=
\frac{
\sum_{\tau=1}^{t-1}
\mathrm{SNR}(\boldsymbol{\theta}^{\tau})
\cdot \mathbbm{1}_{\theta_n^\tau=\varphi}
\cdot \mathbbm{1}_{\tau \in \mathcal{T}_t}
}{
\sum_{\tau=1}^{t-1}
\mathbbm{1}_{\theta_n^\tau=\varphi}
\cdot \mathbbm{1}_{\tau \in \mathcal{T}_t}
},
\end{equation}
where $\mathcal{T}_t$ is the set of indices of all exploration blocks before block $t$.
The above process is summarized in Algorithm \ref{alg:epsilon_greedy}. In contrast to the CSM blind beamforming method in \cite{Arun2020RFocus, ren2022configuring} that uses the first $T_0$ blocks for exploration, we now decide the use of each block $t$, exploration or exploitation, randomly.

The above interpretation further leads us to the definition of \emph{regret} as considered in the reinforcement learning field. Since 
the CPP solution $\{\theta^\star_n\}$ is the ultimate goal and it can be obtained from $Q^{T_0}_{n}(\varphi)$ with $T_0\rightarrow\infty$, we may consider the following regret for the action $\bm\theta^t$ in block $t$:
\begin{equation}
    G_t = \sum^N_{n=1} \left(Q^\infty_n(\theta_n^\star)-Q^\infty_n(\theta_n^{t})\right).
\end{equation}
It can be shown that $G_t$ must be nonnegative and it equals 0 if $\theta_n^t = \theta^\star_n$ for every $n$. We seek the optimal $\{\bm\theta^{t}\}$ to minimize the expected cumulative regret over the $T$ blocks:
\begin{subequations}
\label{new opt problem}
\begin{align}
    &\underset{\{\bm{\theta}^t\}}{\text{minimize}} \quad \mathbb{E}\left[\sum_{t=1}^{T}G_t\right]
    \label{new opt problem:obj}\\
    &\text {subject to} \quad \theta_{n}^t \in \Phi_K,\;\text{for each}\;\;(n,t).
    \label{new opt problem:contraint}
\end{align}
\end{subequations}
The $\epsilon$-greedy algorithm in \eqref{greedy} constitutes the standard approach to the above problem in the reinforcement learning field. But we raise two questions: What is the optimal value of $\epsilon$ in \eqref{greedy}? Is there a better way of exploration than the uniform sampling? We will answer the first question in the following subsection and defer the second question to Section \ref{sec:gradient}.

\begin{algorithm}[t]
\caption{$\epsilon$-Greedy Algorithm}
\label{alg:epsilon_greedy}
\begin{algorithmic}[1]

\State \textbf{Input:} Discrete set $\Phi_K$, number of blocks $T$, exploration probability $\epsilon$
\State Initialize $Q_n^1(\varphi)=0$
\For{$t=1,\ldots,T$}
    \State Update $Q_n^t(\varphi)$ according to \eqref{eq:update_Q}.
    \State Decide block $t$ according to the probability $\epsilon$.

    \If{block $t$ is an exploration block}
        \State Sample $\theta_n^t\sim \mathrm{Uniform}(\Phi_K)$, $\forall n=1,\ldots,N$
    \Else
        \State Set $\theta_n^t\in\arg\max_{\varphi\in\Phi_K} Q_n^t(\varphi), \forall n=1,\ldots,N$
    \EndIf

    \State Apply $\boldsymbol{\theta}^t=(\theta_1^t,\ldots,\theta_N^t)$ and measure
    $\mathrm{SNR}(\boldsymbol{\theta}^t)$.
\EndFor
\end{algorithmic}
\end{algorithm}

\subsection{Parameter Tuning for $\epsilon$-Greedy Algorithm}
\label{subsec:greedy}

Clearly, there is a trade-off in deciding the value of $0<\epsilon<1$. If $\epsilon\rightarrow0$, then the distortion between the real CSM $Q^\infty_n(\varphi)$ and the estimated CSM  $Q^t_n(\varphi)$ can be large, so the resulting blind beamforming solution is of low quality. Conversely, if $\epsilon\rightarrow1$, then most blocks are used for exploration and consequently SNR should be fairly low for most of the time.
But optimizing $\epsilon$ directly in \eqref{new opt problem} is quite difficult because the objective is difficult to express in a closed form with respect to $\epsilon$. To solve this issue, the following proposition provides an upper approximation of regret that facilitates the optimization significantly.

\begin{proposition}
\label{prop:greedy_bound}
For the $\epsilon$-greedy method, the expected value of the cumulative regret over $T$ blocks is upper bounded as
\begin{equation}
\label{eq:upper_bound}
\mathbb{E}\left[\sum_{t=1}^{T}G_t\right] \leq\rho TN\left(\frac{2K}{T^4}+\epsilon\right)+2N\rho(1-\epsilon )\sqrt{\frac{2KT\log T}{\epsilon}}
\end{equation}
given any parameter $0<\epsilon<1$, where $\rho = \frac{P}{\sigma^2}\left(\sum_{n=0}^{N}|h_n|\right)^2$.
\end{proposition}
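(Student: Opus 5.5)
We bound the per-block regret separately on exploration and exploitation blocks, and control exploitation blocks with a concentration argument for the CSM estimates. First observe that every SNR value, and hence every conditional mean $Q_n^\infty(\varphi)$ and every estimate $Q_n^t(\varphi)$, lies in $[0,\rho]$. This follows from the triangle inequality $|h_0+\sum_n h_n e^{j\theta_n}|\le\sum_{n=0}^N|h_n|$. Consequently each summand of $G_t$ lies in $[0,\rho]$ (nonnegativity by optimality of $\theta_n^\star$ for $Q_n^\infty$), so $G_t\le N\rho$ always.

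\emph{Exploration blocks.} A block is an exploration block with probability $\epsilon$, and there its regret is at most $N\rho$. Summed over $T$ blocks this contributes at most $\rho TN\epsilon$, which gives the $\epsilon$ term in the first part of \eqref{eq:upper_bound}.

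\emph{Exploitation blocks.} With probability $1-\epsilon$ we play the greedy action $\hat\theta_n^t=\arg\max_\varphi Q_n^t(\varphi)$. For each $n$, the standard greedy argument gives
\begin{equation*}
Q_n^\infty(\theta_n^\star)-Q_n^\infty(\hat\theta_n^t)\le 2\max_{\varphi}|Q_n^t(\varphi)-Q_n^\infty(\varphi)|.
\end{equation*}
To see this, add and subtract $Q_n^t$ and use $Q_n^t(\hat\theta_n^t)\ge Q_n^t(\theta_n^\star)$.

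\emph{Concentration step.} Within exploration blocks the samples $\{\mathrm{SNR}(\bm\theta^\tau):\theta_n^\tau=\varphi\}$ are i.i.d. with mean $Q_n^\infty(\varphi)$ and range $[0,\rho]$. Let $m_n^t(\varphi)$ be the number of such samples before block $t$. Hoeffding's inequality gives
\begin{equation*}
|Q_n^t(\varphi)-Q_n^\infty(\varphi)|\le\rho\sqrt{\frac{c\log T}{m_n^t(\varphi)}}
\end{equation*}
with probability at least $1-2T^{-2c}$. I would pick $c$ so that, after a union bound over $\varphi\in\Phi_K$ (and over the random count), the failure probability is of order $2K/T^4$ per RE. On the failure event we charge regret at most $\rho$ per RE per block. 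This produces the $\rho TN\cdot 2K/T^4$ term.

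On the good event, the exploitation regret in block $t$ is at most $2N\rho\sqrt{c\log T/m}$, where $m$ is the minimal count. The expected count is $\mathbb E[m_n^t(\varphi)]\approx\epsilon t/K$, since the exploration blocks so far number about $\epsilon t$ and each is split uniformly over $K$ choices. Substituting, and using $\sum_{t\le T}t^{-1/2}\le 2\sqrt T$, the exploitation contribution is about
\begin{equation*}
(1-\epsilon)\sum_t 2N\rho\sqrt{\frac{cK\log T}{\epsilon t}}.
\end{equation*}
This has the form $2N\rho(1-\epsilon)\sqrt{2KT\log T/\epsilon}$ after fixing constants.

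\emph{Main obstacle.} The hard part is making this rigorous: the counts $m_n^t(\varphi)$ are random, and Hoeffding must be applied with a random sample size. I would first condition on the counts, or use a union bound over all possible count values $m\le T$; this costs an extra factor of $T$ in the failure probability, which the $T^{-4}$ target absorbs. I would then replace $m$ by $\mathbb E[m]$, either via Jensen on the concave map $x\mapsto\sqrt{1/x}$ in the right direction or via a Chernoff lower bound on $m$. The early blocks with tiny counts are covered by the trivial bound $\rho$, folded into the same terms. Since the statement is described as an ``upper approximation,'' matching the exact constants may require some looseness in this step.
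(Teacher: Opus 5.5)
Your decomposition is the same as the paper's. Exploration blocks are charged $\rho$ per RE. In exploitation blocks, Hoeffding plus a union bound over $\Phi_K$ gives a good event of probability at least $1-2K/T^4$, on which $Q_n^t(\theta_n^t)\ge Q_n^t(\theta_n^\star)$ caps the per-RE regret at twice the deviation threshold; $\rho$ is charged on the failure event.

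\textbf{The gap.} The gap is your last step, ``after fixing constants'': there are no free constants left. The failure term $\rho TN\cdot 2K/T^4$ forces $c=2$, since Hoeffding gives $2T^{-2c}$ per $\varphi$. With $c=2$, your time-dependent count $m\approx\epsilon t/K$ and $\sum_{t\le T}t^{-1/2}\le 2\sqrt T$ give
\[
(1-\epsilon)\sum_{t=1}^{T}2N\rho\sqrt{\frac{2K\log T}{\epsilon t}}\le 4N\rho(1-\epsilon)\sqrt{\frac{2KT\log T}{\epsilon}},
\]
which is twice the stated term. This is before two further costs, neither of which has any room left in \eqref{eq:upper_bound}:
\begin{itemize}
\item the fluctuation of $\min_\varphi m_n^t(\varphi)$ below $\epsilon t/K$;
\item the burn-in blocks where the counts are tiny or zero.
\end{itemize}
Your Jensen option is also backwards. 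The map $x\mapsto x^{-1/2}$ is convex, so $\mathbb E[m^{-1/2}]\ge(\mathbb E m)^{-1/2}$, and this expectation is even infinite when $m=0$ has positive probability. Only a Chernoff lower tail on $m$ could work, and it brings its own failure probability. (Conditioning on the counts is fine for the Hoeffding step, since the selected SNR samples are i.i.d.\ given them, so you do not need the union bound over count values.)

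\textbf{How the paper gets the constant.} In every exploitation block $t$ it uses the single threshold $\rho\sqrt{2K\log T/(\epsilon T)}$. That is, it applies Hoeffding as if each $Q_n^t(\varphi)$ were already an average of $\epsilon T/K$ samples, the end-of-horizon count, whatever $t$ is. The per-block bound is then independent of $t$. Multiplying by $T$ gives exactly $T\cdot 2\rho\sqrt{2K\log T/(\epsilon T)}=2\rho\sqrt{2KT\log T/\epsilon}$ per RE, which is where the displayed constant comes from.

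This step overstates the available sample size for $t<T$, which is consistent with the text calling the result an ``upper approximation.'' Your more careful tracking of $m_n^t(\varphi)$ is precisely what exposes it. Made rigorous, your argument proves a bound of the same order $O\bigl(N\rho\sqrt{KT\log T/\epsilon}\bigr)$, but with a larger constant and an extra burn-in term, not \eqref{eq:upper_bound} itself. To land on the displayed inequality you would have to adopt the paper's horizon-level threshold.
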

\begin{IEEEproof}
We begin with an exploitation block $t$. Defining an event for each $\varphi\in\Phi_K$ as
\begin{equation}
    \label{eq:event}
    \mathcal{E}(\varphi)=\left\{\left|Q_n^t(\varphi)-Q_n^\infty(\varphi)\right|> \rho\sqrt{\frac{2K\log T}{\epsilon\cdot T}}\right\},
\end{equation}
we show that
\begin{align}
\label{eq:pro_event}
&\mathbb{P}\left\{\left|Q_n^t(\varphi)-Q_n^\infty(\varphi)\right|\leq \rho\sqrt{\frac{2K\log T}{\epsilon\cdot T}},\;\forall \varphi\in\Phi_K\right\}\notag\\ 
&= 1-\mathbb{P}\Big\{\bigcup\nolimits_{\varphi\in\Phi_K}\mathcal{E}(\varphi)\Big\}\notag\\
&\stackrel{(a)}{\geq}1-\sum_{\varphi\in\Phi_K}\mathbb{P}\left\{\mathcal E(\varphi)\right\} \notag \\
&\stackrel{(b)}{\geq} 1-\frac{2K}{T^4},
\end{align}
where $(a)$ follows by the union bound while $(b)$ follows by Hoeffding's inequality. Moreover, the exploitation in \eqref{greedy} guarantees that
\begin{equation}
 \label{eq:regret_cond}
Q_n^t(\theta_n^t)\geq Q_n^t(\theta_n^\star).  
\end{equation}
Combining \eqref{eq:pro_event} and \eqref{eq:regret_cond}, we have
\begin{equation}
\mathbb{P}\left\{Q_n^\infty(\theta^\star_n)-Q_n^\infty(\theta^t_n)\le 2\rho\sqrt{\frac{2K\log T}{\epsilon\cdot T}}\right\}\ge 1-\frac{2K}{T^4}.
\end{equation}

We now consider a general block $t=1,\ldots,T$. Clearly, it always holds that
\begin{equation}
\label{eq:bound_1}
Q^\infty_n(\theta_n^\star)-Q^\infty_n(\theta_n^{t})
\leq \rho.
\end{equation}
Thus, the expected value of $Q^\infty_n(\theta_n^\star)-Q^\infty_n(\theta_n^{t})$ can be bounded from above as
\begin{align}
&\mathbb{E}\left[Q^\infty_n(\theta_n^\star)-Q^\infty_n(\theta_n^{t})\right] \notag \\
    &\leq \epsilon\rho+(1-\epsilon)\left(\left(1-\frac{2K}{T^4}\right)2\rho\sqrt{\frac{2K\log T}{\epsilon\cdot T}}+\frac{2K}{T^4}\rho\right) \notag\\
    &\leq \epsilon\rho+2(1-\epsilon)\rho\sqrt{\frac{2K\log T}{\epsilon\cdot T}} + \frac{2K}{T^4}\rho.
\label{Q:diff}
\end{align}
Summing both sides of \eqref{Q:diff} gives rise to \eqref{eq:upper_bound}.
\end{IEEEproof}

We now use the upper approximation in \eqref{eq:upper_bound} to replace the objective function in \eqref{new opt problem}. The resulting new problem turns out to be convex in $\epsilon$, so it suffices to solve the first-order condition:
\begin{equation}
\label{first order cond}
    \sqrt{\epsilon^3 T^3}-T(1+\epsilon )\sqrt{2K\log T}=0.
\end{equation}
Further, it can be shown that the above first-order equation has only one real root 
\begin{equation}
\label{opt:epsilon}    \epsilon^\star=\sqrt{\frac{2K\log T}{9T}}+\sqrt[3]{a+\sqrt{a^2+b^3}} 
    +\sqrt[3]{a-\sqrt{a^2+b^3}},
\end{equation}
where 
\begin{subequations}
\begin{align}
    a&=\left(1+\frac{4K\log T}{27T}\right)\sqrt{\frac{K\log T}{2T}} \\
    b&=\frac{2K\log T}{9T}.
\end{align}
\end{subequations}

For the CSM method, if $T_0$ is relaxed to a continuous variable, then we have $T_0^\star=\epsilon^\star T$, since it is a special case of the $\epsilon$-greedy algorithm. However, $T_0$ must be an integer, and thus $T_0^\star$ can be obtained by rounding $\epsilon^\star\cdot T$ to the nearest integer.

\section{Proposed Non-Uniform Sampling}
\label{sec:gradient}

\subsection{Gradient Sampling}
Our discussion thus far is based on uniform sampling---which is questionable since it does not take the past performance into account when choosing a new $\bm\theta^t$ for exploration. For instance, if using a particular phase shift $\varphi\in\Phi_K$ for RE $n$ always results in poor performance, then it is advisable to avoid this setting in future explorations.

As the main contribution of this work, we extend the gradient bandit method in \cite{sutton2018reinforcement} to address the above issue, as stated in what follows. Specifically, we introduce a \emph{numerical preference} variable $H_n^t(\varphi)$ for each $(n,\varphi)$ in block $t$ to characterize the ``goodness'' of setting $\theta^t_n=\varphi$. Roughly speaking, for each RE $n$, if a phase shift option $\varphi \in \Phi_K$ has a larger numerical preference variable $H_n^t(\varphi)$ at block $t$, it will have a higher probability of being sampled in this block. To achieve this, we incorporate the softmax probability
\begin{equation}
\label{eq:prob_phi}
\pi_n^t(\varphi)\triangleq\mathbb{P}\left\{\theta_n^t=\varphi\right\}\triangleq\frac{e^{H_n^t(\varphi)}}{\sum_{\varphi^\prime\in\Phi_K}e^{H_n^t(\varphi^\prime)}},
\end{equation}
and make a soft decision for each $\theta^t_n$ as
\begin{equation}
    \label{eq:distribution}
    \theta_n^t=\varphi\;\text{with probability}\; \pi^t_n(\varphi).
\end{equation}

\begin{algorithm}[t]
\caption{Gradient Sampling Algorithm}
\label{alg:gradient_sampling}
\begin{algorithmic}[1]

\State \textbf{Input:} set $\Phi_K$, number of blocks $T$, learning rates $\{\gamma^t\}$
\State Initialize $H_n^1(\varphi)=0,\forall n,\varphi$.

\For{$t=1,\ldots,T$}

    \For{$n=1,\ldots,N$}
        \State Compute
        $\pi_n^t(\varphi),\,\forall \varphi\in\Phi_K$ according to \eqref{eq:prob_phi}.

        \State Decide each $\theta^t_n$ according to \eqref{eq:distribution}.
    \EndFor

    \State Apply $\boldsymbol{\theta}^t=(\theta_1^t,\ldots,\theta_N^t)$ and measure
    $\mathrm{SNR}(\boldsymbol{\theta}^t)$.
    
    \State Compute $\delta^t$ according to \eqref{eq:delta_t}.
    
    \State Obtain $H_n^{t+1}(\varphi),\forall n,\varphi$ according to \eqref{eq:update_H}.

\EndFor
\end{algorithmic}
\end{algorithm}

It remains to specify how each $H^t_n(\varphi)$ is obtained. By convention, every $H^1_n(\varphi)$ is set to zero. Sequentially, each $H^{t+1}_n(\varphi)$ is computed based on the previous $\{H_n^{t}(\cdot),\pi^{t}_n(\varphi)\}$ and the past rewards as
\begin{equation}
    \label{eq:update_H}
    H_n^{t+1}(\varphi) = H_n^{t}(\varphi) + \gamma^t\cdot g_n^t(\varphi),
\end{equation}
where
\begin{align}
    \label{eq:g_nt}
    g_n^t(\varphi) &= \delta^t\left(\mathbbm{1}\{\varphi = \theta_n^t\}-\pi^t_n(\varphi)\right) \notag \\
    &=\begin{cases}
  \delta^t(1-\pi_n^{t}(\theta_n^t)), & \text{if } \varphi=\theta_n^t, \\
  -\delta^t\pi_n^{t}(\varphi),  & \forall\varphi\ne\theta_n^t,
\end{cases}
\end{align}
and $\gamma^t>0$ is the learning rate in block $t$. Moreover,
\begin{equation}
    \label{eq:delta_t}
    \delta^t = \mathrm{SNR}(\bm\theta^t) - \overline{r}^{t-1}
\end{equation}
is the difference between the latest reward (which is the SNR value) and the average reward before block $t$:
\begin{equation}
\bar r^{t-1}
\triangleq
\frac{1}{t-1}\sum_{i=1}^{t-1}\mathrm{SNR}(\boldsymbol{\theta}^i).
\end{equation}
By convention, we set $\bar r^{0}=0$. According to \eqref{eq:update_H}, if the SNR under the current sample is higher than the average SNR up to the previous round, we increase the numerical preference of the phase selected for each RE $n$. The process of the proposed gradient sampling algorithm is summarized in Algorithm \ref{alg:gradient_sampling}. Note that we no longer distinguish between exploration and exploitation when applying the gradient sampling algorithm. In the following subsection, we show that the proposed gradient sampling algorithm is an instance of stochastic gradient ascent, thereby guaranteeing its convergence.

\begin{remark}
Although the proposed gradient sampling algorithm is originally designed to maximize the received SNR, it can be readily extended to other objectives. Specifically, let $f(\bm{\theta})$ denote an objective function defined with respect to the IRS configuration $\bm{\theta}$. For a maximization problem, the gradient sampling algorithm can be used by replacing $\mathrm{SNR}(\bm{\theta}^t)$ with $f(\bm{\theta}^t)$ as the reward in block $t$. For a minimization problem, $-f(\bm{\theta}^t)$ can be used as the reward in block $t$. Therefore, the proposed gradient sampling algorithm is applicable to tasks such as SNR reduction and multi-user max--min fairness. We also evaluate the performance of gradient sampling on these tasks in Section \ref{sec:field}.
\end{remark}

\subsection{Convergence Analysis}

This section aims to provide a deeper insight into the gradient sampling algorithm by extending the analysis in \cite{sutton2018reinforcement}. Specifically, we will show that the gradient sampling algorithm is a stochastic approximation of gradient ascent to solve the expected SNR maximization problem, and therefore we can characterize its convergence behavior. 

Let
\begin{equation}
    \mathbf H \triangleq \{H_n(\varphi): n=1,\ldots,N,\ \varphi\in\Phi_K\}
\end{equation}
be the set of all numerical preference variables. We now suppress the block index and write each $H_n(\phi)$ as a generic numerical preference variable; its value at block $t$ generated by Algorithm \ref{alg:gradient_sampling} is denoted by $H_n^t(\phi)$. Given $\mathbf H$, the REs independently sample their phase shifts according to the softmax probabilities in \eqref{eq:prob_phi}. Therefore, the probability of a specific phase-shift vector $\boldsymbol{\theta}=(\theta_1,\theta_2,\ldots,\theta_N)$ is given by
\begin{equation}
\prod_{n=1}^N \pi_n(\theta_n)= \prod_{n=1}^N\frac{e^{H_n(\theta_n)}}{\sum_{\varphi^\prime\in\Phi_K}e^{H_n(\varphi^\prime)}}.
\end{equation}
By treating $\mathbf H$ as the optimization variable, we define the expected SNR as a function of $\mathbf H$:
\begin{align}
R(\mathbf H)
&\triangleq
\mathbb E_{\boldsymbol{\theta}}
\left[\mathrm{SNR}(\boldsymbol{\theta})\right] \notag \\
&=\sum_{\boldsymbol{\theta}\in\Phi_K^N}
\left(
\prod_{n=1}^N \pi_n(\theta_n)
\right)
\mathrm{SNR}(\boldsymbol{\theta}).
\label{eq:expected_snr_objective}
\end{align}
Rather than directly solving the SNR maximization problem \eqref{eq:SNR_problem}, the proposed gradient sampling algorithm aims to solve the expected SNR maximization problem:
\begin{equation}
    \underset{\mathbf H}{\text{maximize}}\quad R(\mathbf H).
    \label{eq:actual_problem}
\end{equation}
In other words, the gradient sampling algorithm aims to maximize the expected SNR by optimizing $\bm H$ to adjust the sampling distribution of $\bm{\theta}$, such that phase-shift vectors associated with higher SNR values are sampled more frequently. We remark that the above problem reformulation is reasonable because problem \eqref{eq:actual_problem} is equivalent to the original problem \eqref{eq:SNR_problem} based on the following observation: if $\bm\theta=(\theta_1^\star,\theta_2^\star,\ldots,\theta_N^\star)$ is an optimal phase-shift vector, then we can assign $\mathbb{P}\left\{\theta_n=\theta_n^\star\right\} = 1$ by setting $H_n(\theta_n^\star)=+\infty$.

We next show that the proposed gradient sampling algorithm is a stochastic approximation of gradient ascent to solve problem \eqref{eq:actual_problem}. It suffices to show that $g_n^t(\varphi)$ in \eqref{eq:g_nt} is an unbiased estimator of the gradient of $R(\bH)$ with respect to the variable $H_n(\varphi)$ in block $t$, as summarized in the following proposition.

\begin{proposition}
For every block $t\geq 1$, RE $n$, and phase shift $\varphi\in\Phi_K$, the gradient sampling update $g_n^t(\varphi)$ in \eqref{eq:g_nt} is an unbiased estimator of the partial derivative of the expected SNR objective \eqref{eq:expected_snr_objective} with respect to the numerical preference variable $H_n(\varphi)$, i.e.,
\begin{equation}
\mathbb E_{\boldsymbol\theta^t}\left[g_n^t(\varphi)\right]
=\left. \frac{\partial R(\mathbf H)}{\partial H_n(\varphi)} \right|_{\mathbf H=\mathbf H^t},
\label{eq:unbiased_coordinate_gradient}
\end{equation}
where $\boldsymbol\theta^t$ is the random phase-shift vector given by the gradient sampling algorithm in block $t$.
\label{prop:unbiased}
\end{proposition}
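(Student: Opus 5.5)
The plan is to follow the classical gradient-bandit argument of \cite{sutton2018reinforcement}, adapted to the product-form sampling distribution over $\Phi_K^N$. Throughout, I will read $\mathbb E_{\boldsymbol\theta^t}[\cdot]$ as the expectation over the fresh draw $\boldsymbol\theta^t$, conditioned on the history of blocks $1,\ldots,t-1$. Under this conditioning both $\mathbf H^t$ and the baseline $\bar r^{t-1}$ are fixed, deterministic quantities. Making this conditioning explicit is the one conceptual point that needs care, since $\bar r^{t-1}$ depends on past samples; the statement is meaningful only in this conditional sense.

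\emph{Step 1 (softmax derivative).} From \eqref{eq:prob_phi}, I will compute, for any RE $n$ and any $\theta_n,\varphi\in\Phi_K$,
\begin{equation*}
\frac{\partial \pi_n(\theta_n)}{\partial H_n(\varphi)} = \pi_n(\theta_n)\left(\mathbbm 1\{\theta_n=\varphi\}-\pi_n(\varphi)\right).
\end{equation*}
I will also note that $\pi_m(\cdot)$ does not depend on $H_n(\varphi)$ for $m\neq n$.

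\emph{Step 2 (gradient of $R$).} I will differentiate \eqref{eq:expected_snr_objective} term by term, which is legitimate because the sum is finite. By the product rule, only the factor $\pi_n(\theta_n)$ contributes, so
\begin{equation*}
\frac{\partial R(\mathbf H)}{\partial H_n(\varphi)}=\sum_{\boldsymbol\theta\in\Phi_K^N}\Big(\prod_{m=1}^N\pi_m(\theta_m)\Big)\mathrm{SNR}(\boldsymbol\theta)\left(\mathbbm 1\{\theta_n=\varphi\}-\pi_n(\varphi)\right).
\end{equation*}
Evaluated at $\mathbf H=\mathbf H^t$, this is exactly $\mathbb E_{\boldsymbol\theta^t}\!\left[\mathrm{SNR}(\boldsymbol\theta^t)\left(\mathbbm 1\{\theta^t_n=\varphi\}-\pi^t_n(\varphi)\right)\right]$. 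This is the likelihood-ratio (score-function) identity.

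\emph{Step 3 (baseline removal).} By \eqref{eq:delta_t} and \eqref{eq:g_nt}, we have
\begin{equation*}
\mathbb E[g_n^t(\varphi)] = \mathbb E\!\left[\mathrm{SNR}(\boldsymbol\theta^t)\left(\mathbbm 1\{\theta_n^t=\varphi\}-\pi_n^t(\varphi)\right)\right]-\bar r^{t-1}\,\mathbb E\!\left[\mathbbm 1\{\theta^t_n=\varphi\}-\pi^t_n(\varphi)\right].
\end{equation*}
The last expectation equals $\pi^t_n(\varphi)-\pi^t_n(\varphi)=0$, because the marginal law of $\theta^t_n$ given the history is $\pi_n^t$. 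The baseline term therefore vanishes, and combining with Step 2 yields \eqref{eq:unbiased_coordinate_gradient}. The convention $\bar r^0=0$ covers the case $t=1$.

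The main obstacle is Step 3. It relies on $\bar r^{t-1}$ being measurable with respect to the past, so that it can be pulled out of the conditional expectation. I will state this explicitly, noting that $\bar r^{t-1}$ is built only from $\mathrm{SNR}(\boldsymbol\theta^1),\ldots,\mathrm{SNR}(\boldsymbol\theta^{t-1})$ and never uses $\boldsymbol\theta^t$. The remaining steps are routine calculus.
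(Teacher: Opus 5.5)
Your proposal is correct and follows essentially the same route as the paper: differentiate the softmax, substitute into the finite sum defining $R(\mathbf H)$ to obtain the score-function form, then remove the baseline using $\mathbb E[\mathbbm 1\{\varphi=\theta_n^t\}-\pi_n^t(\varphi)]=0$. Your explicit conditioning on the history just makes precise the paper's remark that $\bar r^{t-1}$ is a constant for the current block.
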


\begin{IEEEproof}
We first derive the partial derivative of $R(\mathbf H)$ with respect to $H_n(\varphi)$ for an arbitrary RE $n$ and phase shift $\varphi\in\Phi_K$. By the definition of $R(\mathbf H)$ in \eqref{eq:expected_snr_objective}, when the partial derivative is taken with respect to $H_n(\varphi)$, only the $n$th softmax term depends on $H_n(\varphi)$. Therefore,
\begin{align}
\label{eq:partial_H}
\frac{\partial R(\mathbf H)}
{\partial H_n(\varphi)}
&=
\sum_{\boldsymbol{\theta}\in\Phi_K^N}
\mathrm{SNR}(\boldsymbol{\theta})
\left(
\prod_{m\neq n}\pi_m(\theta_m)
\right)
\frac{\partial \pi_n(\theta_n)}
{\partial H_n(\varphi)}.
\end{align}
Based on the softmax probability given in \eqref{eq:prob_phi}, we have
\begin{align}
\label{eq:partial_softmax}
&\frac{\partial \pi_n(\theta_n)}{\partial H_n(\varphi)} \notag\\
&=
\frac{\partial}{\partial H_n(\varphi)}
\left[
\frac{e^{H_n(\theta_n)}}{\sum\limits_{\theta_n^\prime\in\Phi_K} e^{H_n(\theta_n^\prime)}}
\right] \notag\\
&=
\frac{
\frac{\partial e^{H_n(\theta_n)}}{\partial H_n(\varphi)}
\sum_{\theta_n^\prime\in\Phi_K} e^{H_n(\theta_n^\prime)}
-
e^{H_n(\theta_n)}
\frac{\partial \sum_{\theta_n^\prime\in\Phi_K} e^{H_n(\theta_n^\prime)}}{\partial H_n(\varphi)}
}{
\left(\sum_{\theta_n^\prime\in\Phi_K} e^{H_n(\theta_n^\prime)}\right)^2
} \notag\\
&=
\frac{
\mathbbm{1}\{\varphi=\theta_n\} e^{H_n(\theta_n)}
}{
\sum_{\theta_n^\prime\in\Phi_K} e^{H_n(\theta_n^\prime)}
}
-
\frac{
e^{H_n(\theta_n)} e^{H_n(\varphi)}
}{
\left(\sum_{\theta_n^\prime\in\Phi_K} e^{H_n(\theta_n^\prime)}\right)^2
}
\notag\\
&=
\pi_n(\theta_n)\left(\mathbbm{1}\{\varphi=\theta_n\}-\pi_n(\varphi)\right).
\end{align}
Substituting \eqref{eq:partial_softmax} into \eqref{eq:partial_H} gives
\begin{equation}
\frac{\partial R(\mathbf H)}
{\partial H_n(\varphi)}
=
\mathbb E_{\boldsymbol\theta}\left[
\mathrm{SNR}(\boldsymbol{\theta})
\left(
\mathbbm{1}\{\varphi=\theta_n\}
-
\pi_n(\varphi)
\right)
\right],
\label{eq:gradient_coordinate}
\end{equation}
and therefore
\begin{equation}
    \left. \frac{\partial R(\mathbf H)}{\partial H_n(\varphi)} \right|_{\mathbf H=\mathbf H^t}=\mathbb E_{\boldsymbol\theta^t}\left[
\mathrm{SNR}(\boldsymbol{\theta}^t)
\left(
\mathbbm{1}\{\varphi=\theta_n^t\}
-
\pi_n(\varphi)
\right)
\right].
\end{equation}

We then compute the expectation of $g_n^t(\varphi)$:
\begin{subequations}
\label{eq:baseline_decomposition}
\begin{align}
&\mathbb E_{\boldsymbol\theta^t}\left[g_n^t(\varphi)\right] \notag \\
&=
\mathbb E_{\boldsymbol\theta^t}\left[
\left(
\mathrm{SNR}(\boldsymbol{\theta}^t)-\bar r^{t-1}
\right)
\left(
\mathbbm{1}\{\varphi=\theta_n^t\}
-
\pi_n^t(\varphi)
\right)
\right]\label{eq:E_gnt_1}\\
&=
\mathbb E_{\boldsymbol\theta^t}\left[
\mathrm{SNR}(\boldsymbol{\theta}^t)
\left(
\mathbbm{1}\{\varphi=\theta_n^t\}
-
\pi_n^t(\varphi)
\right)
\right] \notag \\
&\quad
-
\bar r^{t-1}
\mathbb E_{\boldsymbol\theta^t}\left[
\mathbbm{1}\{\varphi=\theta_n^t\}
-
\pi_n^t(\varphi)
\right].\label{eq:E_gnt_2}
\end{align}
\end{subequations}
Note that we can derive from \eqref{eq:E_gnt_1} to \eqref{eq:E_gnt_2} because $\bar r^{t-1}$ is a constant for the current block $t$.
Furthermore, the probability of $\varphi=\theta_n^t$ is given by $\pi_n^t(\varphi)$. Hence,
\begin{equation}
\mathbb E_{\boldsymbol\theta^t}\left[
\mathbbm{1}\{\varphi=\theta_n^t\}
-
\pi_n^t(\varphi)
\right]
=
\pi_n^t(\varphi)-\pi_n^t(\varphi)
=
0.
\end{equation}
Thus, \eqref{eq:baseline_decomposition} reduces to
\begin{equation}
\mathbb E_{\boldsymbol\theta^t}\left[g_n^t(\varphi)\right]
=
\mathbb E_{\boldsymbol\theta^t}\left[
\mathrm{SNR}(\boldsymbol{\theta}^t)
\left(
\mathbbm{1}\{\varphi=\theta_n^t\}
-
\pi_n^t(\varphi)
\right)
\right].
\label{eq:expect_gnt}
\end{equation}
Comparing \eqref{eq:expect_gnt} with \eqref{eq:gradient_coordinate} proves
\eqref{eq:unbiased_coordinate_gradient}.
\end{IEEEproof}

Proposition \ref{prop:unbiased} reveals that the expected update of the gradient sampling algorithm equals the gradient of the expected SNR with respect to the numerical preference variable. Therefore, the algorithm is an instance of stochastic gradient ascent for solving problem \eqref{eq:actual_problem}. This indicates that the gradient sampling algorithm has good convergence properties, as shown in the following two theorems.

\begin{theorem}[Asymptotic first-order stationarity]
\label{theorem:asymptotic}
Consider the gradient sampling update in \eqref{eq:update_H} where $g_n^t(\varphi)$ is defined in \eqref{eq:g_nt}. If the learning rate $\gamma^t$ satisfies the Robbins–Monro conditions, i.e.,
\begin{equation}
\label{eq:gamma_condition}
    \gamma^t > 0, \quad
\sum_{t=2}^{\infty} \gamma^t = \infty, \quad
\sum_{t=2}^{\infty} (\gamma^t)^2 < \infty,
\end{equation}
then the generated sequence is asymptotically first-order stationary, i.e., for all $n$ and $\varphi$, we have
\begin{equation}
    \lim_{t\to\infty}\mathbb E_{\boldsymbol\theta^t}\left[g_n^t(\varphi)\right] = \lim_{t\to\infty}\left. \frac{\partial R(\mathbf H)}{\partial H_n(\varphi)} \right|_{\mathbf H=\mathbf H^t}=0 \text{ almost surely.}
\end{equation}
\end{theorem}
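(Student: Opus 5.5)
The plan is to treat Algorithm \ref{alg:gradient_sampling} as stochastic gradient ascent on the smooth, bounded objective $R(\mathbf H)$. We then run the standard quasi-martingale / Robbins--Siegmund argument for nonconvex SGD with unbiased, bounded-variance gradient estimates. Proposition \ref{prop:unbiased} already supplies unbiasedness: conditioned on the history $\mathcal F_{t-1}$ (which fixes $\mathbf H^t$ and $\bar r^{t-1}$), we have $\mathbb E[g_n^t(\varphi)\mid\mathcal F_{t-1}]=\nabla_{n,\varphi}R(\mathbf H^t)$. Write $g^t=\nabla R(\mathbf H^t)+\xi^t$ with $\mathbb E[\xi^t\mid\mathcal F_{t-1}]=0$.

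\textbf{Step 1 (regularity).} Since $0\le \mathrm{SNR}(\bm\theta)\le\rho$ with $\rho$ as in Proposition \ref{prop:greedy_bound}, $R$ is bounded in $[0,\rho]$. It is a finite sum of products of softmax functions, hence $C^\infty$. Its Hessian has a global bound $L$, depending only on $\rho$, $N$ and $K$, because all derivatives of softmax are polynomials in the probabilities $\pi_n(\cdot)\in[0,1]$. So $\nabla R$ is $L$-Lipschitz on all of $\mathbb R^{NK}$.

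\textbf{Step 2 (bounded second moment).} We have $|\delta^t|\le\rho$, since both the SNR and the running average $\bar r^{t-1}$ lie in $[0,\rho]$, and $|\mathbbm 1\{\cdot\}-\pi|\le1$. Hence $\|g^t\|^2\le NK\rho^2$ deterministically.

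\textbf{Step 3 (descent inequality).} By the $L$-smoothness from Step 1,
$R(\mathbf H^{t+1})\ge R(\mathbf H^t)+\gamma^t\langle\nabla R(\mathbf H^t),g^t\rangle-\tfrac L2(\gamma^t)^2\|g^t\|^2$.
Taking conditional expectations and using Step 2 gives
$\mathbb E[R(\mathbf H^{t+1})\mid\mathcal F_{t-1}]\ge R(\mathbf H^t)+\gamma^t\|\nabla R(\mathbf H^t)\|^2-\tfrac L2 NK\rho^2(\gamma^t)^2$.
Apply the Robbins--Siegmund theorem to $\rho-R(\mathbf H^t)\ge0$, using $\sum(\gamma^t)^2<\infty$. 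This yields $\sum_t\gamma^t\|\nabla R(\mathbf H^t)\|^2<\infty$ a.s., and $R(\mathbf H^t)$ converges a.s. Combined with $\sum\gamma^t=\infty$, this gives $\liminf_t\|\nabla R(\mathbf H^t)\|=0$ a.s.

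\textbf{Step 4 (upgrade $\liminf$ to $\lim$) --- the main obstacle.} Use the standard oscillation argument (as in Bertsekas--Tsitsiklis). Suppose $\limsup\|\nabla R(\mathbf H^t)\|>2\eta$ on some event. Then there are infinitely many upcrossing intervals $[s_k,e_k]$ on which $\|\nabla R\|$ passes from below $\eta$ to above $2\eta$ and stays above $\eta$ in between. On such an interval, $L$-Lipschitzness gives $\eta/L\le\|\mathbf H^{e_k}-\mathbf H^{s_k}\|$.

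We bound this displacement by $\sum_{t=s_k}^{e_k-1}\gamma^t\|\nabla R(\mathbf H^t)\|$ plus the martingale increment $\sum_{t=s_k}^{e_k-1}\gamma^t\xi^t$. The martingale $\sum_t\gamma^t\xi^t$ converges a.s. because $\sum(\gamma^t)^2\mathbb E\|\xi^t\|^2<\infty$, so its increments over late intervals vanish. The drift part is at most $\eta^{-1}\sum_{t\in[s_k,e_k)}\gamma^t\|\nabla R\|^2$, which also vanishes as $k\to\infty$ by Step 3. This forces $\eta/L\le o(1)$, a contradiction. Hence $\nabla R(\mathbf H^t)\to0$ a.s.

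Finally, the first equality in the statement is exactly Proposition \ref{prop:unbiased} applied at each $t$. This gives $\lim_t\mathbb E_{\bm\theta^t}[g_n^t(\varphi)]=0$ a.s. for every $n$ and $\varphi$.
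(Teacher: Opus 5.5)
Your proposal is correct and follows the paper's route. The paper checks the same ingredients (Robbins--Monro step sizes, unbiasedness from Proposition~\ref{prop:unbiased}, the bound $|g_n^t(\varphi)|\le\max_{\bm\theta}\mathrm{SNR}(\bm\theta)$, and global smoothness and boundedness of $R(\mathbf H)$) and then simply cites the Bertsekas--Tsitsiklis convergence result, whereas your Steps~3--4 reprove that result via Robbins--Siegmund and the upcrossing argument.

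One small repair is needed in Step~4: the bound $\|\nabla R(\mathbf H^t)\|\le\eta^{-1}\|\nabla R(\mathbf H^t)\|^2$ fails at the left endpoint $t=s_k$, where $\|\nabla R\|<\eta$. That single term contributes at most $\gamma^{s_k}\eta\to0$, so the contradiction still goes through.
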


\begin{IEEEproof}
We prove the theorem by verifying the standard conditions for stochastic gradient convergence in \cite{bertsekas2000gradient}. Specifically, a stochastic gradient ascent recursion is asymptotically first-order stationary if the following four conditions hold:
\begin{enumerate}
\item the learning rates satisfy the Robbins--Monro conditions;
\item the stochastic gradient is unbiased;
\item the stochastic gradient has a bounded second moment;
\item the objective function is globally smooth and bounded.
\end{enumerate}
The second condition has been established in Proposition \ref{prop:unbiased}, and the first condition is assumed. Therefore, it remains to prove the last two conditions.

We first verify the third condition. By the definition of $g_n^t(\varphi)$ in \eqref{eq:g_nt}, we have
\begin{equation}
    |\delta^t| \leq \max_{\bm\theta\in\Phi_K^N}\mathrm{SNR}(\bm\theta),
\end{equation}
and
\begin{equation}
    |\mathbbm{1}\{\varphi = \theta_n^t\}-\pi^t_n(\varphi)| \leq 1.
\end{equation}
Therefore,
\begin{equation}
|g_n^t(\varphi)|^2\leq\left(\max_{\bm\theta\in\Phi_K^N}\mathrm{SNR}(\bm\theta)\right)^2<+\infty,
\end{equation}
so the stochastic gradient has a bounded second moment.

It remains to verify the last condition. We first show that $R(\mathbf H)$ is globally smooth. Since the action space $\Phi_K^N$ is finite and each softmax probability $\pi_n(\theta_n)$ is a smooth function of the corresponding preference variables ${H_n}$ with bounded derivatives, $R(\mathbf H)$ is a finite sum of smooth functions with bounded derivatives. Hence, $R(\mathbf H)$ is globally smooth. We then show that $R(\mathbf H)$ is bounded. Clearly, we have
\begin{equation}
0\leq R(\mathbf H)
\leq
\max_{\bm\theta\in\Phi_K^N}\,\mathrm{SNR}(\bm\theta).
\label{eq:R_bounded_above}
\end{equation}

Thus, the proposed gradient sampling algorithm satisfies all four conditions. By Proposition 3 in \cite{bertsekas2000gradient}, the sequence generated by \eqref{eq:update_H} is asymptotically first-order stationary.
\end{IEEEproof}

\begin{figure}[t]
    \centering
    \includegraphics[width=\linewidth]{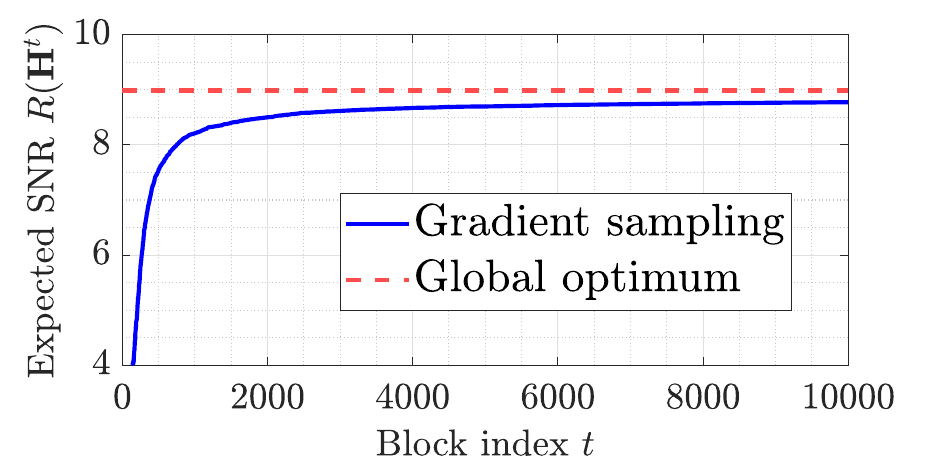}
    \caption{The toy example to show that the gradient sampling algorithm cannot converge to the global optimum of the expected SNR maximization problem \eqref{eq:actual_problem}. In this example, we set $N=3$ and $K=4$. The channels are $h_0=0$, $h_1=0.6404 + 0.3616j$, $h_2=0.1049 + 1.3040j$, and $h_3=-0.5357 + 0.9471j$. It can be observed that there remains a small gap between the converged objective value of the gradient sampling algorithm and the global optimum of problem \eqref{eq:actual_problem}.}
    \label{fig:example_converge}
\end{figure}

\begin{theorem}[Finite-time stationarity]
\label{theorem:finite}
Consider the gradient sampling update in \eqref{eq:update_H}. Let $L_R$ be the Lipschitz
constant of $R(\bH)$ 
and define
\begin{equation}
    R_{\mathrm{max}}= \max_{\bm\theta\in\Phi_K^N}\mathrm{SNR}(\bm\theta).
\end{equation}
Moreover, we define the partial derivative vector as
\begin{equation}
    \nabla R(\bH)\triangleq
\left(
\frac{\partial R(\bH)}{\partial H_n(\phi)}
\right)_{n=1,\ldots,N,\ \phi\in\Phi_K}.
\end{equation}
Then, for any finite horizon $T\geq 1$ and any positive learning rates
$\{\gamma^t\}_{t=1}^T$, we have
\begin{equation}
\min_{1\leq t\leq T}
\mathbb{E}\!\left[\|\nabla R(\bH^t)\|^2\right]
\leq
\frac{
R_{\mathrm{max}}+L_R NR_{\mathrm{max}}^2\sum_{t=2}^T(\gamma^t)^2
}{
\sum_{t=2}^T\gamma^t
}.
\end{equation}
In particular, by choosing the horizon-dependent constant learning rate, i.e., $\gamma^t=\frac{\eta}{\sqrt{T}}, t=2,\ldots,T$,
for some constant $\eta>0$, we obtain
\begin{equation}
\min_{1\leq t\leq T}
\mathbb{E}\!\left[\|\nabla R(\bH^t)\|^2\right]
\leq
\frac{R_{\max}/\eta+N R_{\max}^2\eta}{\sqrt{T}}
=
O\left(\frac{1}{\sqrt{T}}\right).
\end{equation}
\end{theorem}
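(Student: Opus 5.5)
The plan is the standard descent-lemma argument for nonconvex stochastic gradient ascent. We combine the $L_R$-smoothness of $R(\bH)$, established in the proof of Theorem \ref{theorem:asymptotic}, with the unbiasedness from Proposition \ref{prop:unbiased} and a uniform bound on $\|g^t\|^2$.

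\emph{Step 1: per-step inequality.} Here $L_R$ is read as the Lipschitz constant of $\nabla R$. For the update $\bH^{t+1}=\bH^t+\gamma^t g^t$, where $g^t=(g_n^t(\phi))_{n,\phi}$, the ascent version of the descent lemma gives
\begin{equation*}
R(\bH^{t+1})\ge R(\bH^t)+\gamma^t\langle\nabla R(\bH^t),g^t\rangle-\frac{L_R}{2}(\gamma^t)^2\|g^t\|^2 .
\end{equation*}
Condition on $\mathcal F_t$, the history up to $\bH^t$. Since $\bar r^{t-1}$ is $\mathcal F_t$-measurable, Proposition \ref{prop:unbiased} gives $\mathbb E[g^t\mid\mathcal F_t]=\nabla R(\bH^t)$. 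Taking total expectations therefore yields
\begin{equation*}
\gamma^t\mathbb E\|\nabla R(\bH^t)\|^2\le \mathbb E[R(\bH^{t+1})]-\mathbb E[R(\bH^t)]+\frac{L_R}{2}(\gamma^t)^2\mathbb E\|g^t\|^2 .
\end{equation*}

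\emph{Step 2: bound $\|g^t\|^2$.} This is the step that produces the factor $N$, and I expect it to need the most care. Fix $n$. We have
\begin{equation*}
\sum_{\phi}(\mathbbm 1\{\phi=\theta_n^t\}-\pi_n^t(\phi))^2=(1-\pi_n^t(\theta_n^t))^2+\sum_{\phi\ne\theta_n^t}\pi_n^t(\phi)^2\le (1-p)^2+(1-p)^2\le 2,
\end{equation*}
where $p=\pi_n^t(\theta_n^t)$. Both SNR and $\bar r^{t-1}$ lie in $[0,R_{\max}]$, so $|\delta^t|\le R_{\max}$. Summing over $n$ gives $\|g^t\|^2\le 2NR_{\max}^2$, which cancels the factor $1/2$ in Step 1 and leaves $L_RNR_{\max}^2(\gamma^t)^2$. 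This matches the stated numerator.

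\emph{Step 3: telescope and take the minimum.} Sum the inequality over the blocks indexed by $t=2,\ldots,T$, matching the sums in the statement. Block $1$ is excluded because $\bar r^0=0$, so its contribution is dropped or handled the same way. The telescoped difference $\mathbb E[R(\bH^{T+1})]-\mathbb E[R(\bH^2)]$ is at most $R_{\max}$ by \eqref{eq:R_bounded_above}. The left-hand side is bounded below by $\min_t\mathbb E\|\nabla R(\bH^t)\|^2\sum_{t=2}^T\gamma^t$. Dividing through gives the first claim.

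\emph{Step 4: specialize the step size.} With $\gamma^t=\eta/\sqrt T$ we have $\sum_{t=2}^T\gamma^t=(T-1)\eta/\sqrt T$ and $\sum_{t=2}^T(\gamma^t)^2=(T-1)\eta^2/T$. The bound becomes $R_{\max}\sqrt T/((T-1)\eta)+L_RNR_{\max}^2\eta/\sqrt T$. This is $O(1/\sqrt T)$, and it matches the displayed form up to the $T/(T-1)$ factor and the constant $L_R$, which the statement appears to absorb or normalize.

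A remaining technical point is that $L_R$ must be a global smoothness constant. It exists because each softmax has Hessian entries bounded by absolute constants, and $R$ is a finite mixture of bounded SNR values, so the descent lemma applies on all of $\mathbb R^{NK}$.
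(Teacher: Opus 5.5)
Your proof is correct and follows the same route as the paper, whose proof simply invokes the standard descent-lemma-plus-telescoping analysis of nonconvex stochastic gradient methods (Bottou et al.\ Lemma 4.2 and Theorem 4.10). Your reading of $L_R$ as the smoothness constant of $R$, together with your bound $\|g^t\|^2\le 2NR_{\max}^2$, is exactly what reproduces the stated numerator; the cruder per-coordinate bound from Theorem~\ref{theorem:asymptotic} would give $\tfrac{1}{2}L_R NK R_{\max}^2$ in its place.

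The discrepancy you flag in Step 4 is real. With the sums starting at $t=2$, the specialization gives $R_{\max}\sqrt{T}/((T-1)\eta)+L_R N R_{\max}^2\eta/\sqrt{T}$, so the paper's display drops both $L_R$ and the factor $T/(T-1)$. It would be exact only if $L_R$ were kept and $\gamma^1=\eta/\sqrt{T}$ were included in the sums. The $O(1/\sqrt{T})$ conclusion is unaffected.
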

\begin{IEEEproof}
The proof follows from the standard finite-time analysis of stochastic
gradient methods; see, e.g.,
\cite[Lemma 4.2 and Theorem 4.10]{bottou2018optimization}.
\end{IEEEproof}

In summary, Theorem \ref{theorem:asymptotic}  establishes the asymptotic first-order stationarity of the proposed gradient bandit algorithm under
Robbins--Monro learning rates, while Theorem \ref{theorem:finite} provides the finite-time stationarity that a horizon-dependent constant learning rate yields the $O(\frac{1}{\sqrt{T}})$ convergence rate. We remark that the convergence results in Theorems \ref{theorem:asymptotic} and \ref{theorem:finite} represent the strongest general conclusion for the proposed gradient sampling algorithm. Since the expected SNR objective induced by the product softmax probability is nonconvex, global optimality cannot be guaranteed. We illustrate this point with a toy example for the case of $N=3$ and $K=4$ in Fig. \ref{fig:example_converge}. Consequently, a regret bound analysis for the gradient sampling algorithm is not available because such an analysis would require a guarantee of global optimality.

\begin{figure}
    \centering
    \includegraphics[width=0.85\linewidth]{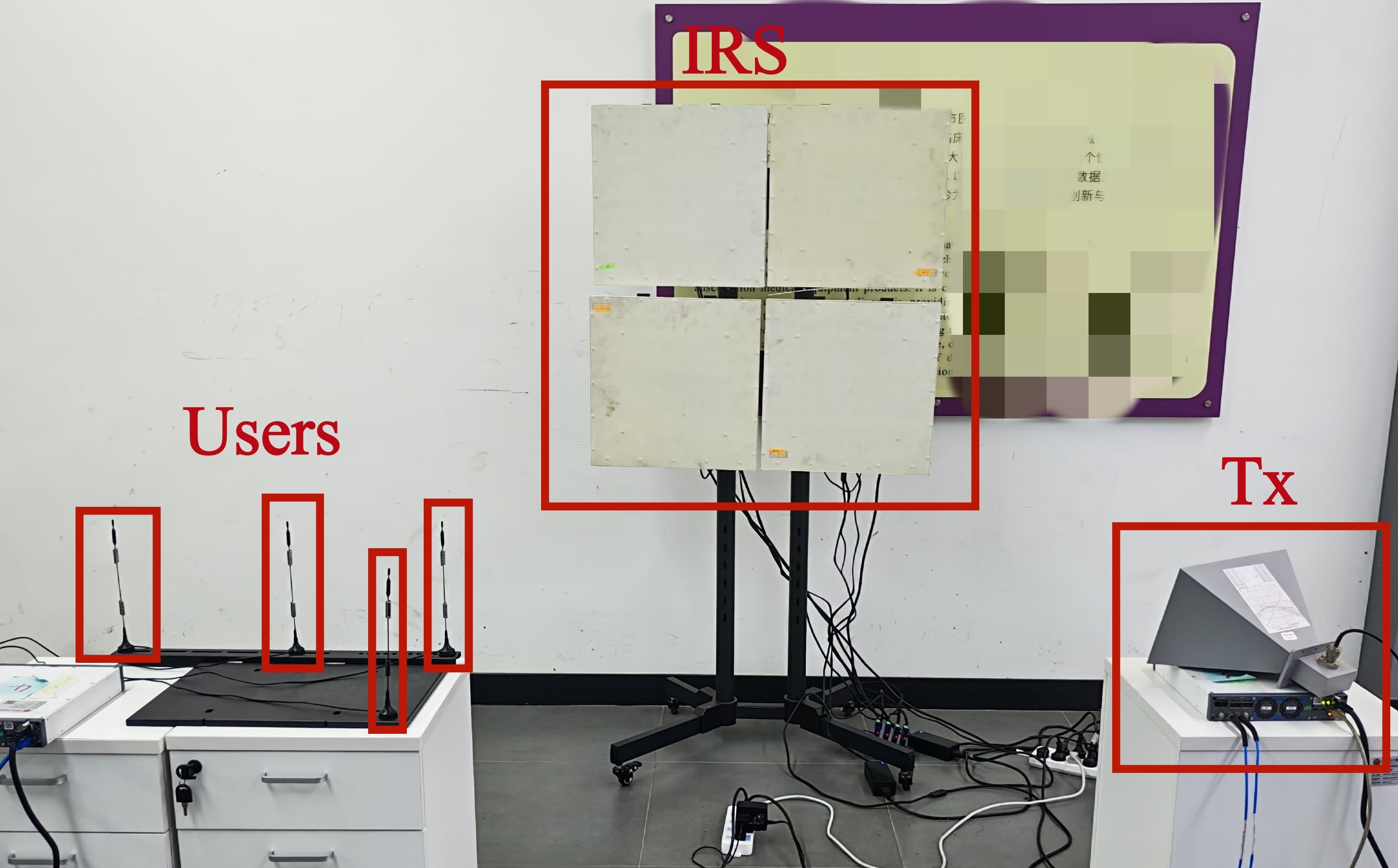}
    \caption{The real-world testbed at 3.5 GHz to evaluate the performance of different reinforcement learning algorithms.}
    \label{fig:field_test}
\end{figure}

\begin{table}[t]
\small
\renewcommand{\arraystretch}{1.5}
\centering
\caption{\small The values of $\epsilon^\star$ and $T_0^\star$ under different $T$.}
\begin{tabular}{|c|ccccc|}
\firsthline
 $T$ & $1000$ & $1500$ & $2000$ & $2500$ & $3000$ \\
\hline
 $\epsilon^\star$ & $0.498$ & $0.431$ & $0.388$ & $0.359$ & $0.336$ \\
 $T_0^\star$ & $498$ & $646$ & $776$ & $897$ & $1008$ \\
\lasthline
\end{tabular}
\label{tab:epsilog_T0_star}
\end{table}

\section{Field Tests}
\label{sec:field}

\subsection{Field Test Settings}

We conducted the field tests in an indoor scenario as shown in Fig.~\ref{fig:field_test}. Our IRS prototype machine consists of custom-designed printed circuit board (PCB) modules, each comprising a $10\times 10$ array of $100$ meta-atoms. Equipped with dual PIN diodes, each meta-atom is independently programmable to furnish four discrete phase-shift states: $\{0, \pi/2, \pi,3\pi/2\}$. The IRS operates at a center frequency of $3.5$ GHz with a $100$ MHz bandwidth. We employ the NI-USRP X410 SDR to instantiate both the transmitter and the receiver. A directional antenna is deployed at the transmitter, and a single omnidirectional antenna is deployed at the receiver. The system operates using a narrowband transmission scheme, emitting a $3.5$ GHz carrier with a $125$ KHz baseband bandwidth. The default transmit power is set to $P=5$ dBm. Baseband signals are sampled at a rate of $500$ KHz at the receiver. All the computations are conducted on a laptop with a 3.2 GHz CPU and 32 GB RAM.

\begin{figure}[t]
   \centering
    \includegraphics[width=\linewidth]{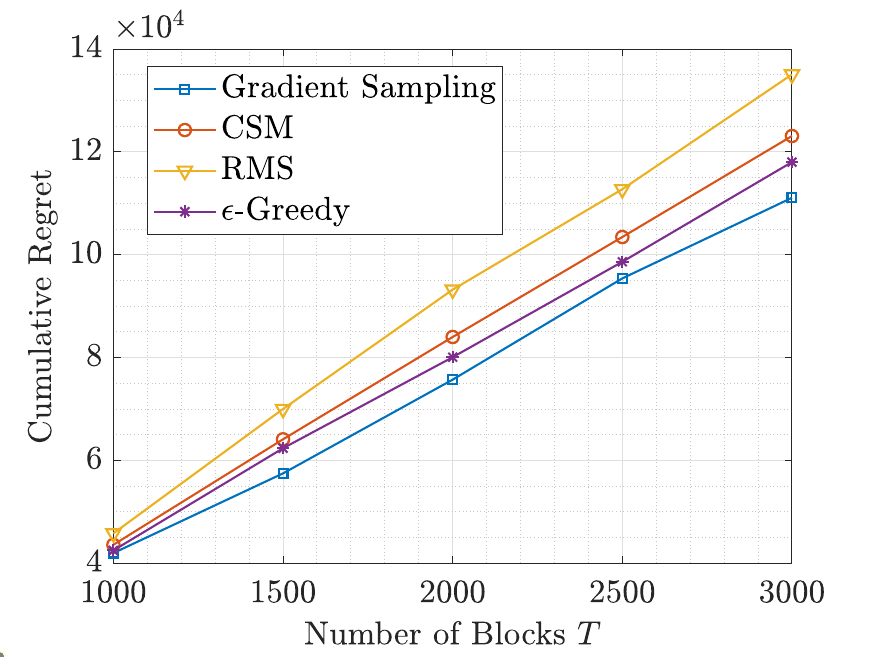}
    \caption{Cumulative regret versus the number of blocks.}
    \label{fig:CRegret}
\end{figure}

\begin{figure}[t]
   \centering
    \includegraphics[width=\linewidth]{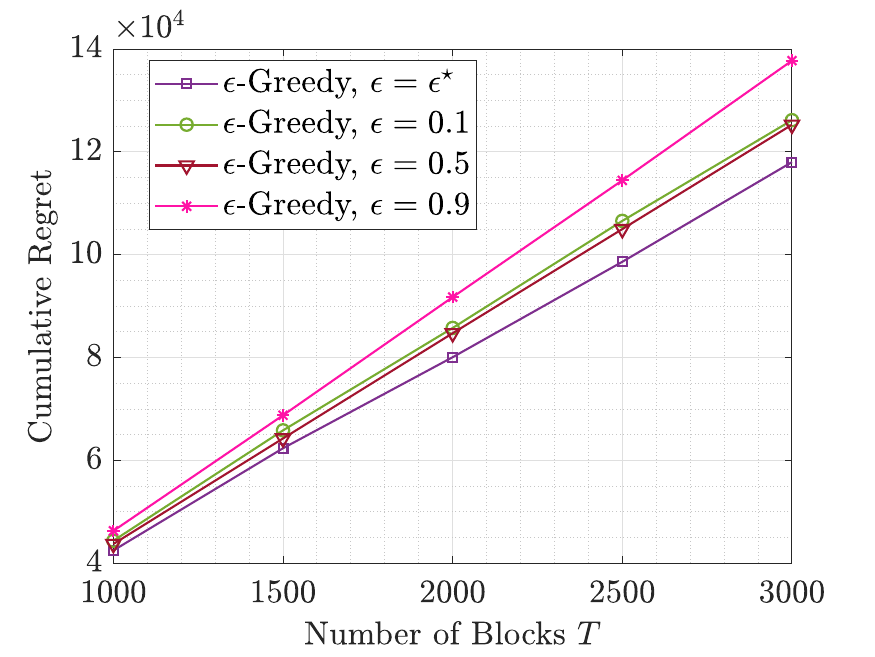}
    \caption{Comparison of different $\epsilon$ values for $\epsilon$-Greedy algorithm.}
    \label{fig:diff_epsilon}
\end{figure}

Aside from CSM, $\epsilon$-greedy, and the proposed gradient sampling algorithm, we also consider a baseline method called \emph{random max sampling (RMS)}---which replaces the exploitation step in the CSM by using the best phase shift vector in the exploration stage, i.e.,
$\bm\theta^t=\arg\max_{\bm\theta\in\{\bm\theta^1,\ldots,\bm\theta^{T_0}\}} \mathrm{SNR}(\bm\theta).$
We set the learning rate $\gamma^t=\frac{0.005}{\sqrt{T}},\;t=1,2,\ldots,T$ for the proposed gradient sampling method. Regarding the $\epsilon$-greedy algorithm, we obtain $\epsilon^\star$ for the different $T$ values according to \eqref{opt:epsilon} as shown in Table \ref{tab:epsilog_T0_star}; we shall also try out other values of $\epsilon$ for comparison purposes. For the CSM and the RMS methods, we obtain $T_0^\star$ by rounding $\epsilon^\star T$ to the nearest integer; the values of $T_0^\star$ for the different $T$ values are also shown in Table \ref{tab:epsilog_T0_star}.
Moreover, we take $Q_n^{10000}(\varphi)$ as an approximation of $Q_n^\infty(\varphi)$ to compute the cumulative regret.

\subsection{SNR Enhancement Performance}
We start by considering the SNR enhancement performance of different algorithms. Fig. \ref{fig:CRegret} shows the cumulative regrets achieved by the various methods versus the number of blocks $T$. Observe that the RMS method yields the highest cumulative regret among the reinforcement learning methods, and the performance gap gradually increases with $T$. This is because the phase shifts used by RMS in the exploitation blocks are not selected to maximize CSM. In contrast, the CSM algorithm achieves a much lower cumulative regret. In addition, $\epsilon$-greedy slightly outperforms CSM when both methods use the same optimized exploration probability $\epsilon^\star$. It is also worth noting that the gradient sampling method significantly outperforms all other methods, although its objective is not specifically designed to minimize cumulative regret.

\begin{figure}[t]
    \centering
    \includegraphics[width=\linewidth]{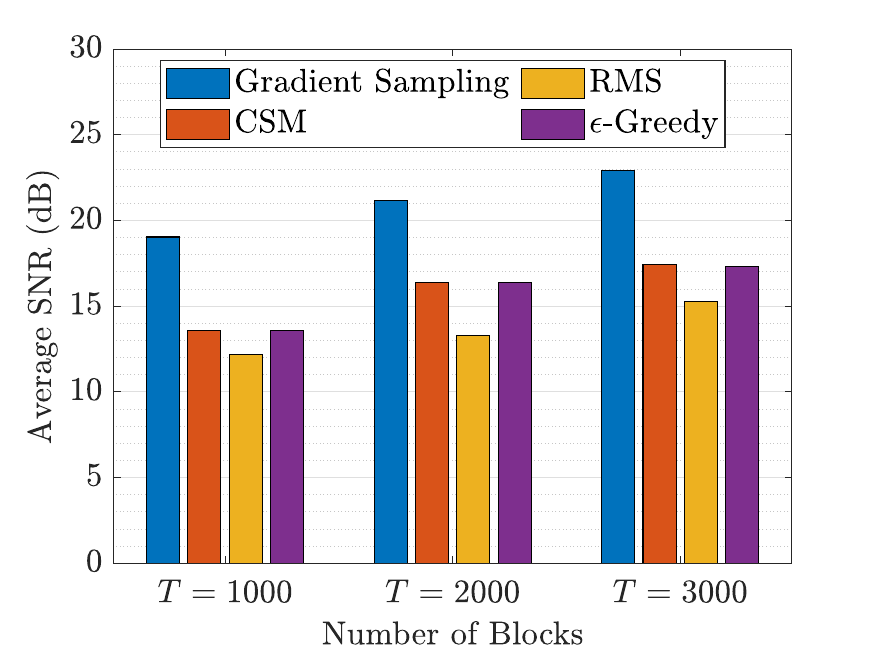}
    \caption{Average SNR versus the number of blocks.}
    \label{fig:avg_SNR}
\end{figure}

\begin{figure}[t]
    \centering
    \includegraphics[width=\linewidth]{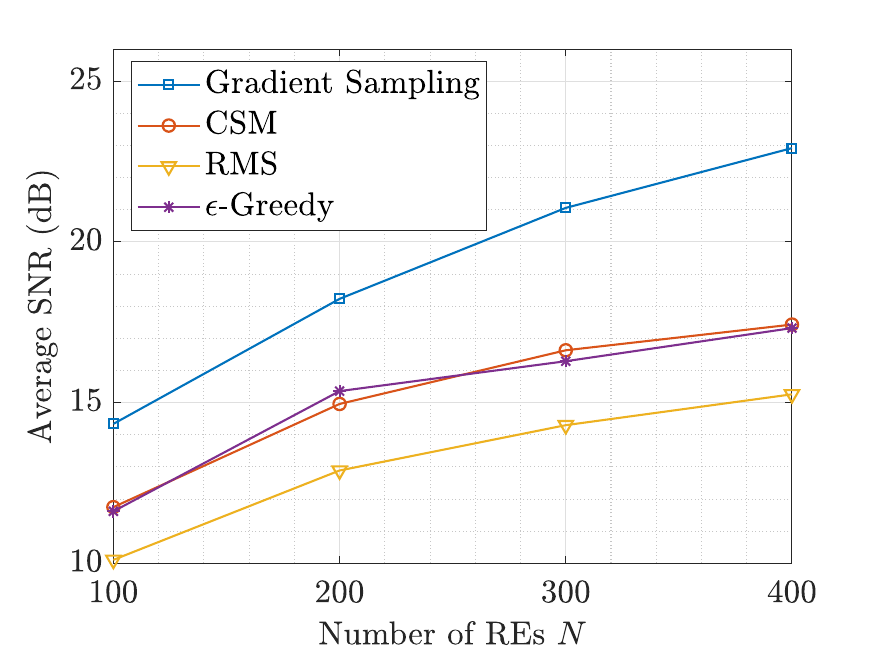}
    \caption{Average SNR versus the number of REs.}
    \label{fig:SNR_N}
\end{figure}

\begin{figure}[t]
    \centering
    \includegraphics[width=\linewidth]{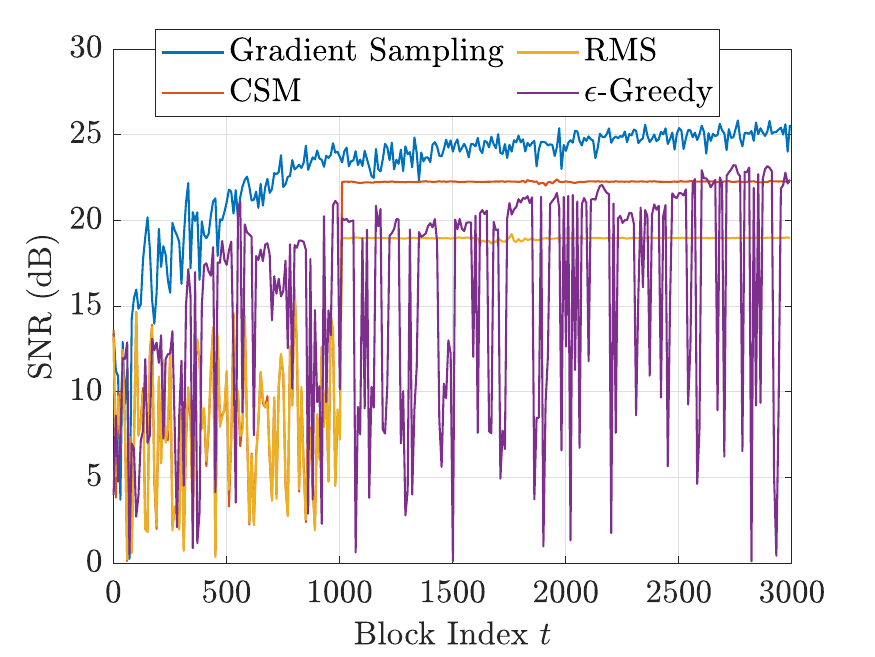}
    \caption{Visualize how the SNR changes with $t$ when $T=3000$.}
    \label{fig:SNR_visual}
\end{figure}

\begin{figure}[t]
    \centering
    \includegraphics[width=\linewidth]{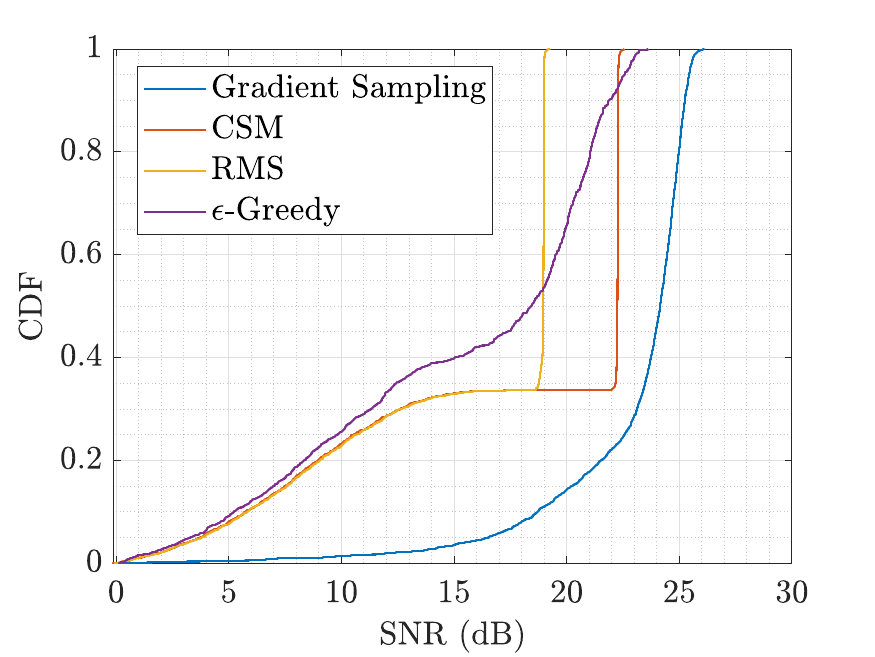}
    \caption{CDF of the SNR when $T=3000$.}
    \label{fig:cdf}
\end{figure}

Next, we examine the impact of the exploration probability on the performance of the $\epsilon$-greedy method. As shown in Fig.~\ref{fig:diff_epsilon}, the cumulative regret is the highest when $\epsilon=0.1$ and $\epsilon=0.9$, which indicates that neither excessive exploration nor excessive exploitation is desirable. The $\epsilon$-greedy method with the optimized $\epsilon^\star$ achieves much better performance than the other choices of $\epsilon$, demonstrating the effectiveness of our strategy for optimizing $\epsilon$ based on the upper-bound approximation of cumulative regret.

Fig.~\ref{fig:avg_SNR} further demonstrates the average SNR across $T$ blocks for the different methods. As $T$ increases from $1000$ to $3000$, all methods show an increasing trend in average SNR, indicating that more blocks provide more observations for identifying effective phase shifts. Among all methods, gradient sampling consistently achieves the best performance. Its average SNR increases from about $19$ dB at $T=1000$ to nearly $23$ dB at $T=3000$. In contrast, RMS achieves the lowest SNR in all cases, suggesting that the phase shifts selected by RMS are less effective for SNR enhancement. CSM and $\epsilon$-greedy achieve comparable performance, but both remain clearly below gradient sampling. For example, when $T=3000$, their average SNR values are about $5$ dB lower than gradient sampling. These results demonstrate the advantage of gradient sampling in identifying high-quality phase shifts.

Fig. \ref{fig:SNR_N} shows the average SNR achieved by different algorithms versus the number of REs $N$. It can be observed that all methods achieve a higher average SNR with more REs. Moreover, gradient sampling consistently achieves the highest average SNR across all values of $N$. Its average SNR increases from about $14$ dB at $N=100$ to about $23$ dB at $N=400$. In comparison, the performance of CSM and $\epsilon$-greedy also improves with $N$, but the improvement is less significant than that of gradient sampling. RMS achieves the lowest performance among the four methods over the entire range of $N$. These results indicate that simply increasing the number of REs is not sufficient; an effective sampling strategy is also needed to fully utilize the enlarged IRS.

Finally, Fig.~\ref{fig:SNR_visual} and Fig.~\ref{fig:cdf} illustrate the block-level SNR performance when $T=3000$. Specifically, Fig.~\ref{fig:SNR_visual} shows how the SNR changes over block index $t$, while Fig.~\ref{fig:cdf} presents the corresponding CDF of the SNR samples. As shown in Fig.~\ref{fig:SNR_visual}, gradient sampling quickly improves the SNR in the early blocks and then maintains a high SNR of about $25$ dB, indicating that it can rapidly identify high-quality phase shifts. In contrast, CSM and RMS exhibit large fluctuations during the exploration stage and then converge to relatively stable SNR values of about $22$ dB and $19$ dB, respectively. The $\epsilon$-greedy method also improves with $t$, but it suffers from frequent low-SNR blocks due to random exploration. The CDF in Fig.~\ref{fig:cdf} confirms these observations from a distributional perspective. The curve of gradient sampling is shifted farthest to the right, and most of its SNR samples are above $23$ dB, whereas CSM and RMS have a considerable portion of samples below this level. Although $\epsilon$-greedy can reach high SNR values, it has a wider SNR distribution and a longer low-SNR tail.

\begin{figure}[t]
    \centering
    \includegraphics[width=\linewidth]{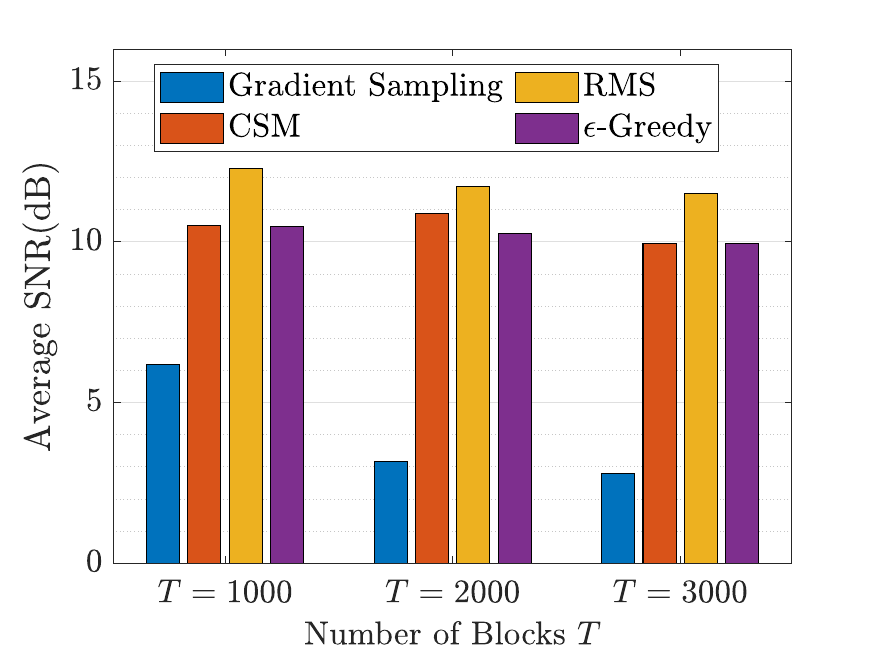}
    \caption{Average SNR versus $T$ for SNR reduction.}
    \label{fig:reduction_T}
\end{figure}

\begin{figure}[t]
    \centering
    \includegraphics[width=\linewidth]{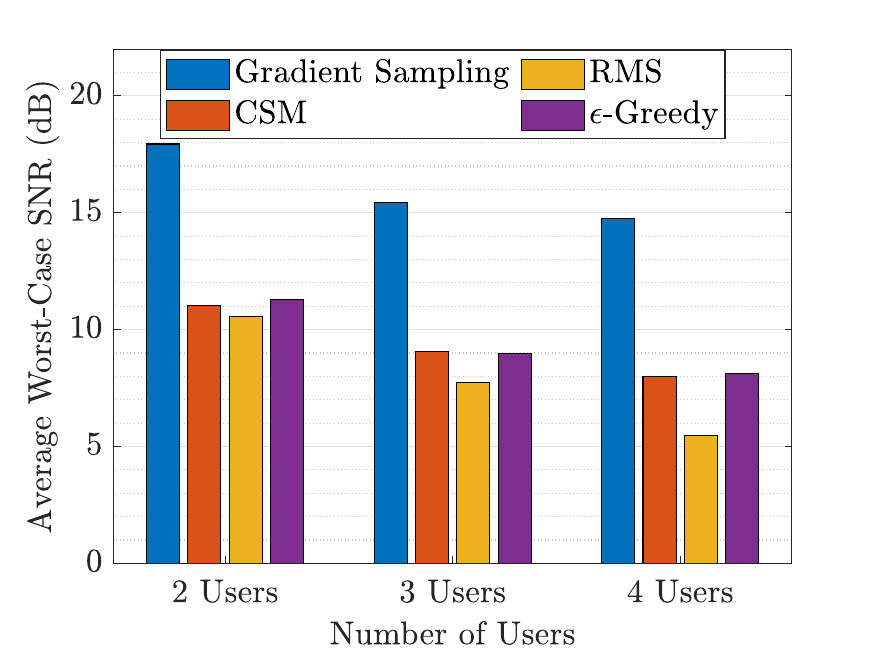}
    \caption{Average worst-case SNR versus the number of users.}
    \label{fig:SNR_U}
\end{figure}

\subsection{SNR Reduction and Max-Min Fairness Performance}

We are also interested in the performance of these reinforcement learning methods in other tasks, where SNR is replaced with the corresponding task-specific objectives. We first consider the SNR reduction task in which the reward is now given by $f(\bm\theta^t)=-\mathrm{SNR}(\bm\theta^t)$. Moreover, we increase the transmit power $P$ to $10$ dBm in this task. Unlike the SNR enhancement task, a lower average SNR indicates better performance in this setting. Fig.~\ref{fig:reduction_T} compares the average SNR achieved by different methods under the SNR reduction task. We find that gradient sampling achieves the lowest SNR across all values of $T$, and its average SNR is below $3$ dB when $T=3000$. This result shows that gradient sampling also performs well for SNR reduction. In contrast, CSM and $\epsilon$-greedy maintain much higher average SNR values, mostly around $10$--$12$ dB, and their performance changes only slightly as $T$ increases. RMS still achieves the worst performance. When $T=3000$, the average SNR of RMS is about $8$ dB higher than that of gradient sampling.

Finally, we evaluate the max-min fairness performance of these algorithms in the multi-user broadcast network with $U$ users. In this setting, the worst-case SNR among the users is the reward in each block $t$, i.e., $f(\bm\theta^t)=\min_{u}\mathrm{SNR}_{u}(\bm\theta^t)$. As shown in Fig.~\ref{fig:SNR_U}, when the number of users increases from $2$ to $4$, the average worst-case SNR of all methods decreases because it becomes more difficult to balance the need for more users simultaneously. However, gradient sampling still maintains a clear advantage over the other methods. For example, when the system serves $4$ users, gradient sampling achieves an average worst-case SNR of about $14.7$ dB, while CSM and $\epsilon$-greedy achieve about $8$ dB, and RMS drops to about $5$ dB. These results show that gradient sampling can also identify high-quality phase shifts for multi-user broadcast transmission.

\section{Conclusion}
\label{sec:conclusion}

Blind beamforming for IRS without channel estimation is of great practical importance. This work considers blind beamforming from a reinforcement learning perspective, showing that the existing CSM method \cite{Arun2020RFocus, ren2022configuring} can be recognized as a special case of the $\epsilon$-greedy algorithm for the multi-armed bandit problem. We further propose an upper approximation of the cumulative regret and thereby optimize the exploration probability $\epsilon$ in closed form. Moreover, we suggest a gradient sampling scheme that is more efficient than the uniform sampling in the $\epsilon$-greedy algorithm. The convergence of the proposed gradient sampling is also verified by showing that it is a stochastic approximation to gradient ascent. Field tests at the 3.5 GHz real-world testbed demonstrate these reinforcement learning approaches.

\bibliographystyle{IEEEtran}     
\bibliography{IEEEabrv,ref}
\end{document}